%% file: main.tex
\documentclass{article}

\usepackage{PRIMEarxiv}

\usepackage[utf8]{inputenc} 
\usepackage[T1]{fontenc}    
\usepackage{url}            
\usepackage{dcolumn}
\usepackage{booktabs}       
\usepackage{multirow}
\usepackage{amsfonts}       
\usepackage{nicefrac}       
\usepackage{microtype}      
\usepackage{lipsum}
\usepackage{fancyhdr}       
\usepackage{xcolor}         
\usepackage{graphicx}       
\usepackage{amsmath}
\usepackage{amsthm}
\usepackage{overpic}
\usepackage{helvet}
\usepackage{siunitx}
\usepackage{array}
\usepackage[numbers,sort&compress,round]{natbib}
\usepackage{mathrsfs}
\usepackage{soul}
\usepackage{ulem}
\usepackage{svg}
\usepackage{comment}
\usepackage{colortbl}
\usepackage{pdflscape}
\usepackage{xspace} 
\usepackage{sectsty} 
\usepackage[toc,page,header]{appendix}
\usepackage{minitoc}
\usepackage{caption}
\DeclareCaptionLabelSeparator{pipe}{ | }
\definecolor{darkblue}{HTML}{627EBC}
\definecolor{darkred}{HTML}{C0483A}
\usepackage[breaklinks,colorlinks=true,citecolor=darkblue,linkcolor=darkred,urlcolor=darkblue]{hyperref}

\allsectionsfont{\sffamily\bfseries}
\setcitestyle{numbers,round,comma}

\theoremstyle{plain}
\newtheorem{proposition}{Proposition}

\theoremstyle{remark}

\theoremstyle{definition}
\newtheorem{definition}{Definition}
\theoremstyle{plain}
\newtheorem{lemma}{Lemma}

\newcommand{\chris}[1]{\textcolor{teal}{[Chris: #1]}}

\excludecomment{mycomment}

\newcommand{\FW}{generative network game graph partner-choice framework\xspace}
\newcommand{\oursfull}{Game Theory Block Model\xspace}
\newcommand{\ours}{GTBM\xspace}

\newcommand{\bA}{\ensuremath{\mathbf{A}}\xspace} 
\newcommand{\bM}{\ensuremath{\mathbf{M}}\xspace} 
\newcommand{\bB}{\ensuremath{\mathbf{B}}\xspace} 
\newcommand{\bU}{\ensuremath{\mathbf{U}}\xspace} 

\newcolumntype{R}{>{\raggedleft\arraybackslash}p{1.6cm}}
\newcolumntype{C}[1]{>{\centering\arraybackslash}m{#1}}

\title{Strategic Origins of Network Structure}

\author{Christoph Riedl \\
  D'Amore-McKim School of Business,\\
  Khoury College of Computer Sciences, \& \\
  Network Science Institute\\
  Northeastern University\\
  Boston, MA 02115 \\
  \texttt{c.riedl@northeastern.edu} \\
\And
  Rory Smead \\
  College of Social Sciences and Humanities \\
Northeastern University\\
Boston, MA 02115\\
    \texttt{r.smead@northeastern.edu}
}

\begin{document}
\doparttoc 
\faketableofcontents 

\part{} 

\maketitle

\begin{abstract}
The structure and dynamics of complex networks are intimately linked.
Despite immense progress in statistically characterizing these systems, a generalized mechanistic framework explaining why specific mesoscale structures emerge remains elusive. 
Here we introduce the game-graph equilibrium, the stable rest point of a payoff-driven partner-choice dynamic, analogous to a Nash equilibrium. This reveals network structure as the fossilized record of strategic interaction.
We prove analytically that two-type games have exactly four network topologies, which are their exhaustive thermodynamic phases: assortative, disassortative, and two core-peripheries. This result generalizes to $K^K$ topological classes for $K$-type interactions. Furthermore, four distinct Nash equilibria produce core-periphery topologies, yet their payoffs reveal whether the underlying dynamic is cooperative or exploitative. To recover the generating dynamics from observed networks, we develop mechanistic inference by constraining a block model to lie on a low-dimensional, game-theoretic manifold. 
Applied to real-world networks, the model yields interpretable quantities, distinguishes states that look identical to conventional methods, predicts network topologies it has never seen without retraining, and controls its own complexity without hand-tuned penalties.
Networks such as international trade are well described by the model despite the hard constraint, suggesting that real-world systems are organized by payoff-driven partner choice.
Our framework resolves a foundational ambiguity by offering a precise, mechanistic definition of communities: strategic equivalence.
\end{abstract}

\keywords{Complex networks, community detection, evolutionary game theory, stochastic block model, phase transitions, mechanistic inference}

\begin{mycomment}
\clearpage
\section*{CONTRIBUTIONS}
\begin{itemize}
    \item Ideas for reviewers / editors: Duncan Watts, Steve Strogatz, M. E. J. Newman, Matt Jackson
    \item unified generative framework that arbitrates between competing sociological and network science paradigms; partner-choice functional form (frequency-dependent softmax)
    \item complete reframing of prior theory: agents don't choose strategies; agents choose partners; changing types is hard
    \item game theory directly maps onto network types: (a-b) vs. (c-d) translates into (a-c) vs. (b-d) in network space
    \item discover and analytically characterize phase transitions between game-theoretic payoffs and macroscopic network topologies (Phase Diagram of Network Topologies); characterize parameter recovery, susceptibility
    \item all possible games produce only 4 different types of graphs: pair X, pair Y, pair same, pair different; i.e., In standard network science, topologies (assortative, disassortative, core-periphery, bipartite) are treated as descriptive labels.
    \item we can predict (forecasting): Using a longitudinal dataset of global networks, we demonstrate that our model successfully predicts future network topologies given only the new macroscopic node proportions.
    \item the fact that forecasting works is itself proof for the theory that complex systems (like trade) operate according to our theorized dynamics
    \item Semantic interpretability and mechanistic explanation: SBM only find descriptive clusters but cannot explain why they arise or what they mean; we can assign game labels and payoffs
    \item {Bayesian SBMs provide statistically principled descriptions of mesoscale structure but leave the block matrix mechanistically uninterpreted.}
    \item \chris{Channel tiago: You can fairly say that standard SBMs don't provide mechanistic or causal interpretation of the block matrix — but calling them "only descriptive" in a paper that might land on his desk is picking a fight with his exact terminology. Better framing: "Standard SBMs infer statistically warranted structure but leave the block matrix uninterpreted — it tells you that groups connect with certain densities, but not why.}
    \item universal and extendable to 3x3, and NxN
    \item we solve label invariance: standard SBM are unsupervised and detected cluster labels are permutation invariant; we develop symmetry breaking that permanently pins the labels to the high-affinity group.
    \item implement as constrained SBM; both 2x2 and 3x3
    \item implicit model selection: 3x3 SBM returns 2x2 solutions if there is one
    \item implemented practical tests for hypothesis test: is the system explained by game-theoretic dynamics or not
    \item $\ell$ acts to quantify how rational the system is
    \item demonstrate practical application to datasets across domains: social, economic, biological
    \item interesting insights about trade networks
    \item very cool connection with geometric networks: geometric networks have manifold of spatial distance, we have manifold of payoff distance; future work could combine them
    \item important implications for policy: if you want to change the network, make sure you change the payoffs far enough to actually cross phase transition boundaries
\end{itemize}
\end{mycomment}


\clearpage

Complex systems display a remarkable universality in their architecture, consistently organizing into topologies defined by distinct structural properties \cite{erdos1959random,watts1998collective,barabasi1999emergence}. This topology interacts with the dynamic behaviors that unfold on it, governing collective behaviors such as synchronization \cite{strogatz2000kuramoto}, spreading processes \cite{pastor2001epidemic,lee2025complex}, systemic resilience \cite{gao2016universal}, and formation of social consensus \cite{fraiberger2018quantifying}. Indeed, the coupling between structure and dynamics is so tight that interaction patterns can reveal the underlying architecture \cite{ma2025reconstructing}, just as architecture can predict interactions \cite{kovacs2019network}. Yet, despite immense progress in statistically characterizing these systems \citep{barzel2013universality,fortunato2010community,jackson1996strategic}, a generalized mechanistic framework explaining why specific mesoscale structures emerge remains elusive. 

Established approaches, such as stochastic block models and community detection methods, provide statistically principled descriptions of mesoscale structure, but leave the block matrix mechanistically uninterpreted. They tell us that groups connect with certain densities, not why \citep{karrer2011stochastic, peixoto2017nonparametric}. 
Micro-foundational approaches explain the mechanisms of network evolution but lack generalizability \citep{barabasi1999emergence,watts1998collective}, as they are typically fine-tuned to produce single structural patterns (e.g., small-world or scale-free networks).
Game theory offers a natural language for interaction incentives, but existing approaches either derive behavioral outcomes from a given topology \citep{ohtsuki2006simple, szabo2007evolutionary}, or treat network formation as a strategic game of link choice \citep{jackson1996strategic, bala2000noncooperative}. Coevolutionary models let games and topology evolve jointly \citep{skyrms2000dynamic, santos2006cooperation}, yet often remain simulation-based and game-specific \citep{fulker2021spite}. What remains absent is the inverse mapping: a general analytical link from the parameters of an arbitrary game to the equilibrium mesoscale structure it induces. Consequently, the mechanistic side cannot scale to the complex, noisy topologies that statistical models so easily describe (see SI \ref{si:related_work} for more related work).

Here we formalize the connection between game payoffs and network topology by inverting game theory's central question. Classical game theory asks whether anyone could gain by changing strategies; we ask whether anyone could gain by changing partners. A game-graph equilibrium, analogous to a Nash equilibrium, is a state where no partner-changing will improve payoffs assuming fixed strategies. These states will be stable rest points of partner-choice dynamics and be a fossilized record of the strategic interaction (Figure~\ref{fig:panel1_conceptual}). 
We prove analytically that the four network topologies---assortative, disassortative, and two core-peripheries---are in fact the exhaustive thermodynamic phases of an evolutionary game. We generalize this result to $K^K$ topological classes for $K$-type interactions. Furthermore, four distinct Nash equilibria all produce core-periphery networks, yet their payoffs reveal whether exploitation or cooperation is at work, a distinction that is not available from standard block models. 
To recover the generating dynamics from observed networks, we develop mechanistic inference by constraining a block model to lie on a low-dimensional, game-theoretic manifold for two- and three-type interactions. 

We apply our inference framework to empirical networks across social, economic, and biological networks. These applications demonstrate how the model yields representations that are directly interpretable in terms of costs and benefits, and roles like ``producers'' vs.~``exploiters.'' It also resolves degeneracies that trip up traditional models, forecasts entirely new regimes out of the box, and avoids overfitting without external regularization.
We find networks like international trade are well described by the model, despite the hard constraint, suggesting that real-world systems are organized by payoff-driven partner choice.
In doing so our model resolves a foundational ambiguity about what a network community is: nodes that occupy strategically equivalent roles. It also pins down the number of communities a network has as equal to the distinct roles, a number the model arrives at endogenously rather than one the analyst must choose or test.
Crucially, misfit is itself a finding. The model encodes exactly the structure that pairwise strategic choice can produce, so when it fails to fit a network, it indicates organization that exceeds what pairwise interaction can generate. Thus, misfit is a falsifiable signal of higher-order, multi-body structure rather than a shortcoming of the method.
Together, these results constitute a unified theory connecting game theory and network science: a direct, analytical map from the payoffs of an interaction to the mesoscale topology it produces. Because the constraint barely costs fit yet recovers mechanism and forecasts rewiring, payoff-driven partner choice is not merely consistent with real networks but appears to organize them. We finish with an online deplatforming case study that illustrates the difference between altering a network's realized topology and altering the incentives that generate it.
A Python implementation will be freely available at \href{https://github.com/riedlc/GTBM}{GitHub}.

\begin{figure}[h!]
    \centering
    \includegraphics[width=1\linewidth]{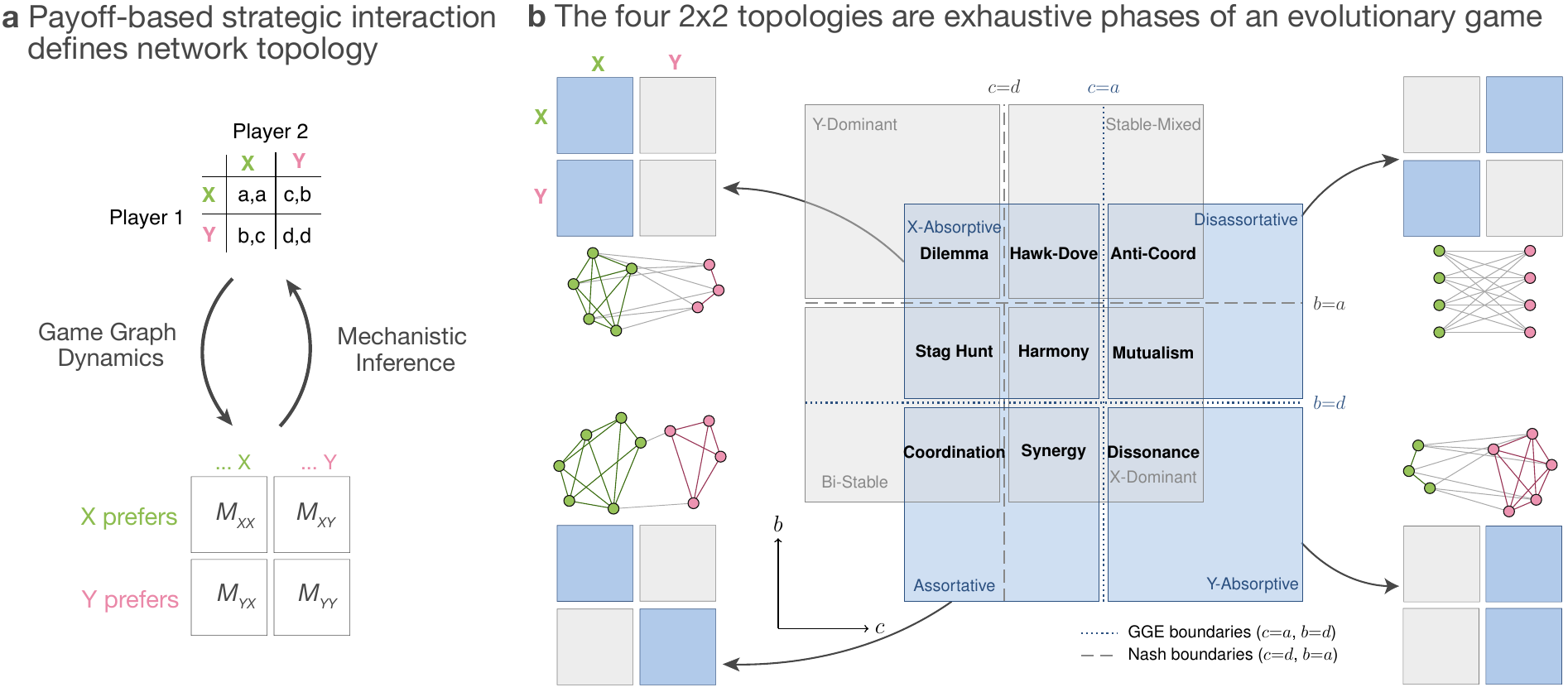}
    \caption{
        \textbf{Connecting payoff-based interaction with evolutionary network topology.}
        \textbf{a} Game-theory payoffs convert into block matrix of connection probabilities via Game-Graph Dynamics. Mechanistic inference identifies payoffs from empirical graphs via the \ours.
        \textbf{b} The network space of Game-Graph Equilibria intersects with the game-theory space of four Nash equilibria (Table \ref{SI:roles} gives semantic role labels for each game).
        }
    \label{fig:panel1_conceptual}
\end{figure}

\section*{Theoretical Model}
\label{model}

We model networks that result from strategic interactions as a process of \textit{partner choice} rather than \textit{strategy choice}. A \textit{game-graph equilibrium} (GGE) is a network configuration such that no individual can increase their payoff by redirecting an outgoing link to a different partner type. GGEs identify which network topologies are stable for which games under any process that favors higher-payoff partners, independently of population frequencies or specific update rules. The dynamics of the model introduces a direct, deterministic mapping between game-theoretic payoffs and resulting network topology, specified by a block matrix \bM in which $M_{ij}$ is the joint probability that a type-$i$ individual interacts with a type-$j$ individual.

A population consists of strategy types defined by their payoff relationships in a game (Figure~\ref{fig:panel1_conceptual}a). Types are assumed to be fixed, but interaction probabilities between types are dynamic. Individuals control their outgoing links and preferentially direct them toward partners that yield higher payoffs. Network ties are updated iteratively: individuals begin from random mixing and update their outgoing link probabilities in proportion to the payoffs received from each partner type (see Methods). This is the replicator dynamic \citep{hofbauer1998evolutionary} applied to link formation rather than strategy frequencies and is analogous to individual-level reinforcement learning \citep{skyrms2000dynamic}. The parameter $\ell$ controls how many rounds of updates occur, with higher $\ell$ producing sharper discrimination among types. The resulting \bM is a coarse-grained representation of the individual-level adjacency matrix \bA in the precise sense of network renormalization \citep{gabrielli2025network}, where a node partition induces a surjective map between adjacency matrices at successive resolution levels. Here, the role played by geometric or diffusional proximity in other schemes \citep{krioukov2010hyperbolic} is played by strategic equivalence. Since all individuals of the same type face identical incentives, any individual-level fixed point is a type-level fixed point: if no individual wants to re-direct a link, the block matrix \bM is at rest.

GGEs are entirely determined by the ordinal payoff structure of the game. For any $K\times K$ game with a unique best partner for each type, there are $K^K$ distinct equilibrium topologies (Proposition~\ref{si:labeled_count}). The GGE's are also the asymptotically stable rest points of any monotone selection dynamic (Proposition~\ref{si:prop-kgt2-hzero}). In $2\times 2$ games, the interaction payoffs ($a, b, c, d$) determine the mesoscale connectivity $(M_{XX}, M_{XY}, M_{YX}, M_{YY})$ (Methods, Eq.~\ref{eq:physics}). The payoff space is partitioned by two phase boundaries ($a=c$ and $b=d$) into four regions, each corresponding to a distinct macroscopic equilibrium topology: assortative, disassortative, and two asymmetric core-periphery configurations (Proposition~\ref{si:result2}). This four-topology classification holds for any monotone selection dynamic (Propositions \ref{si:result1} and \ref{si:result2}). A related but distinct pair of boundaries ($a=b$ and $c=d$) partitions the same space by classical Nash equilibrium structure (Figure~\ref{fig:panel1_conceptual}b) \cite{weibull1997evolutionary}. The combination of these boundaries creates a taxonomy of interaction types that is richer, with nine distinct $2 \times 2$ game types, than we can get from either game-theory or network science (SI Table~\ref{SI:roles} and Figure \ref{fig:affinity_imbalance}). 

\begin{figure}[h!]
    \centering
    \includegraphics[width=1\linewidth]{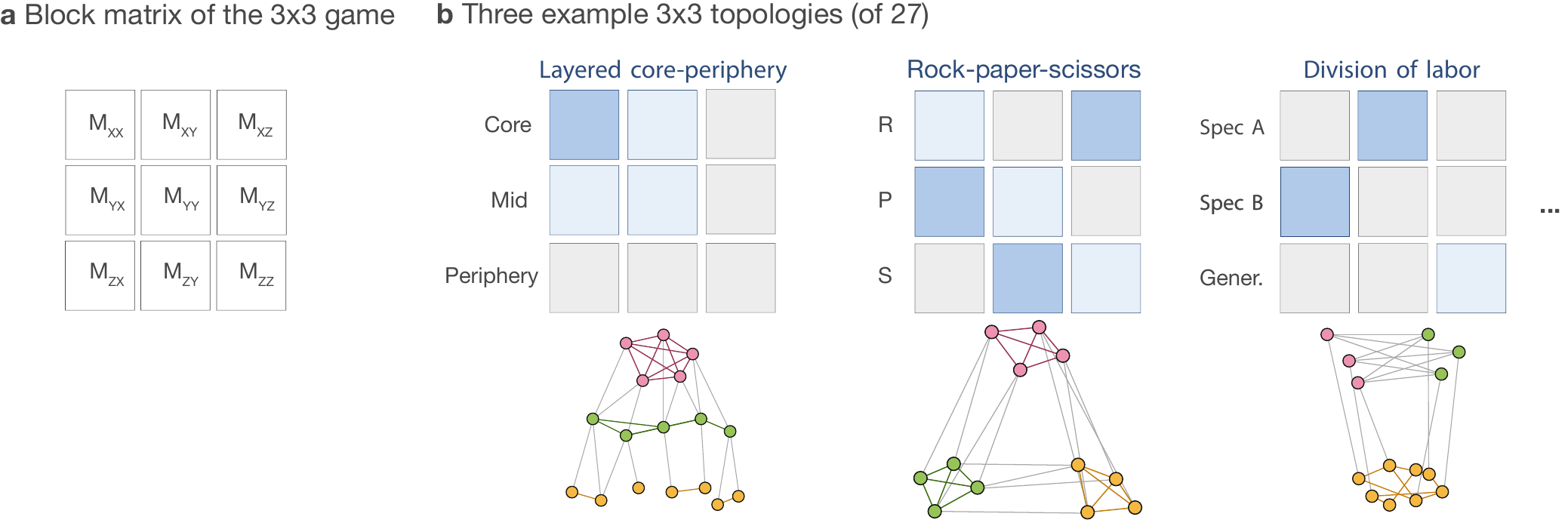}
    \caption{
        \textbf{Extension to the 3x3 game.}
        \textbf{a} Block matrix of 3-type networks.
        \textbf{b} Three example 3x3 network topologies.
        }
    \label{fig:3x3_extension}
\end{figure}

\paragraph{The Rationality Parameter as Inverse Temperature.} Our framework models partner choice via a frequency-dependent softmax activation, mathematically equivalent to the Boltzmann distribution in statistical mechanics and the quantal response function in discrete choice theory \citep{mcfadden1974measurement}. Both formulations describe utility maximization under bounded rationality, governed by an inverse temperature parameter, $\ell$, that controls the signal-to-noise ratio of selection. In socio-economic networks, $\ell$ quantifies agent rationality, or how sharply actors discriminate between payoffs. In biological and chemical systems, where payoffs represent evolutionary fitness or binding affinities, low $\ell$ instead captures stochastic drift or ambient thermal noise \citep{sella2005application}. Despite these distinct substantive interpretations, $\ell$ serves a universal structural role: it sets the sharpness of the topological phase boundaries. In the high-temperature limit ($\ell \rightarrow 0$), discrimination vanishes and the system collapses to maximum entropy (uniform \bB; i.e., an Erd\H{o}s--R\'enyi (ER) random graph) regardless of payoffs. In the low-temperature limit ($\ell \rightarrow \infty$), the softmax converges to a step function and the partition into the four equilibrium topologies becomes exact (Figure~\ref{fig:panel2_physics}a). Network science has long treated these as descriptive categories; they are in fact the exhaustive thermodynamic phases of an evolutionary game. The partition function of this Boltzmann system (Eq.~\ref{eq:mixing_general}) is $Z_r=\sum_s x_s U_{rs}^{\ell}$, and its free energy $F=-\ell^{-1}\sum_r x_r \ln Z_r$ is real-analytic for all finite $\ell$ but develops a first-order non-analyticity (a discontinuity in $\partial F/\partial a$) along the boundaries $a=c$ and $b=d$ as $\ell\to\infty$ (Eq.~\ref{eq:physics}; Figure~\ref{fig:panel1_conceptual}b). The four phases are thus separated by first-order transitions in the $\ell \to \infty$ limit and by analytic crossovers at finite $\ell$---precisely where the generative manifold undergoes geometric dimension reduction (Proposition~\ref{si:manifold_collapse}), the interaction matrix \bM collapsing from a 2D surface to a 1D line. We exploit this geometric collapse for model selection: if $K$ is overspecified, it introduces no spurious structure because redundant types lack the strategic differentiation required to occupy distinct manifold positions (see Results).

\paragraph{Spectral Detectability and Symmetry Breaking.} 
Whether the resulting structure is statistically detectable depends on the payoffs themselves. We characterize this detectability via the Kesten–Stigum criterion \citep{decelle2011inference}, where community structure is information-theoretically identifiable only if the signal-to-noise ratio satisfies $\mathrm{SNR} = |\lambda_2(\mathbf{C})| / \sqrt{ \langle k \rangle} > 1$.
We find that the detectable region in the payoff plane $(b,c)$ is highly non-uniform, depending heavily on the specific payoff configuration. Two distinct mechanisms shape the identifiability boundary (see SI \ref{si:detectability}). The first is a topological ceiling on the signal eigenvalue. Coordination and conflict games naturally produce block-diagonal (assortative or disassortative) network structures whose spectral gap grows with $\ell$, allowing the signal eigenvalue to saturate near the theoretical maximum $\mathrm{SNR}_{\max} = \sqrt{\langle k \rangle}$. Dilemma games are categorically different: because both types prefer the same partner, the network converges to a core-periphery structure (a rank-one, column-concentrated limit). A rank-one expected structure can support only a single detectable spectral signal, so when symmetrized, this structural overlap severely bounds the second eigenvalue, capping the maximum achievable  $\mathrm{SNR}$. Consequently, dilemma structures are information-theoretically undetectable at low mean degrees regardless of the selection intensity $\ell$. The second is a payoff-ratio nullspace. Along the locus $ad=bc$, where within-row payoff ratios balance, the partner-choice equations reduce the joint edge distribution to random-mixing. Across this strip, which threads through the intersection of the four game-type quadrants, the $\mathrm{SNR}$ vanishes entirely for all $\ell$, rendering the structure fundamentally undetectable. While these theoretical ceilings represent the worst-case limits under a parsimonious, homogeneous regular-graph assumption, degree heterogeneity and block-size asymmetry improve detectability substantially.

\begin{figure}[h!]
    \centering
    \includegraphics[width=.75\linewidth]{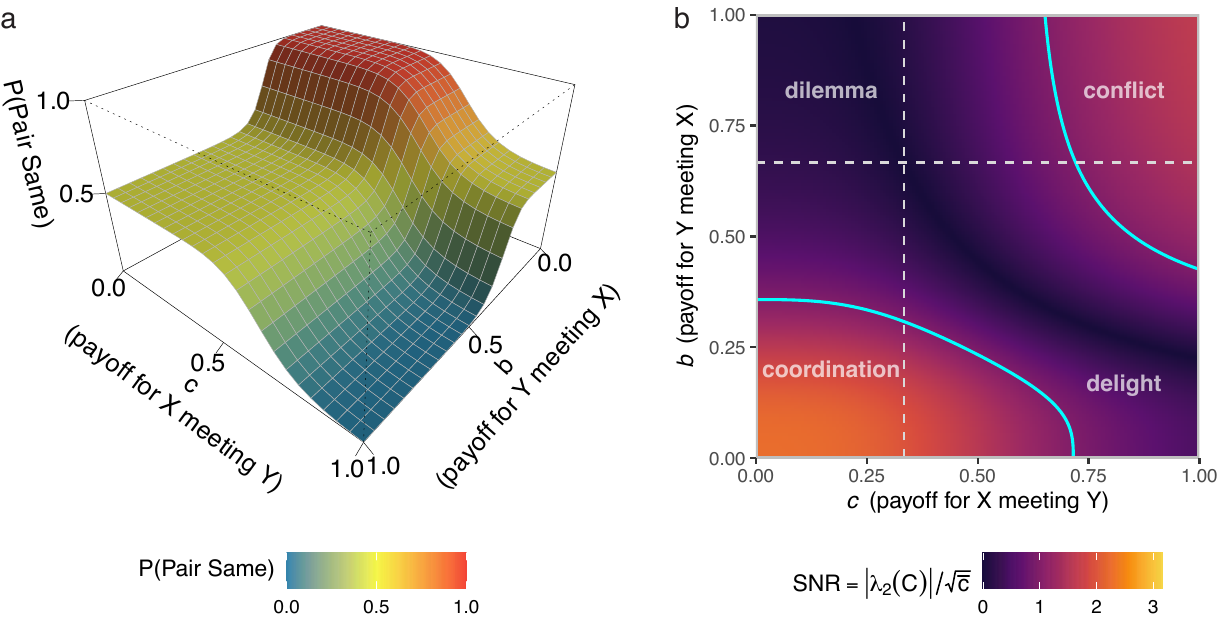}
    \caption{
    \textbf{Topological manifolds from game-theoretic phase space and theoretical distinguishability landscape.}
    \textbf{a,} Sharp crossovers separate the plateaus of the four distinct topological (Assortative, Disassortative, X-Absorptive, and Y-Absorptive) across the payoff space. The probability of assortative mixing, $P(\text{pair same})$, shifts based on independent payoff dominance thresholds ($a=c$ and $b=d$).  Crossovers sharpen into first-order transitions as $\ell \to \infty$, directly mapping microscopic strategic incentives to macroscopic mean-field topologies (figure shows $\ell = 10$).
    \textbf{b,} The heatmap of theoretical lower bound $\mathrm{SNR}$ identifiability. The landscape reveals a striking topological constraint: dilemma interactions form rank-one structures that strictly bound the signal-to-noise ratio, while a fundamental nullspace of competing payoffs perfectly balance, suppressing the latent symmetry-breaking transition and establishing an absolute physical boundary on community formation (homogeneous regular graph $x = 0.5$, $k=10$ and  $\ell = 3$).
    }
    \label{fig:panel2_physics}
\end{figure}

\section*{Results} \label{sec:results} 
We now develop an approach to infer payoffs from the fossilized record of equilibrium network topology through mechanistic inference. This allows us to test whether the mesoscale structure of an empirical network is consistent with pairwise strategic partner choice. Specifically, we formulate a stochastic block model (SBM) which we term the \oursfull (\ours) in which the block connectivity matrix \bM is not a free parameter but is instead a deterministic function of our game-graph model (see Materials and Methods for details). In a standard SBM, the probability (or rate) at which any two nodes connect depends only on their block assignments through a $K \times K$ matrix \bB whose entries are unconstrained \citep{holland1983stochastic,karrer2011stochastic}. Prior work has constrained block models by placing structured priors on \bB---for example, restricting it to ordered or layered configurations \citep{gallagher2021clarified}. Such priors require the researcher to commit to a specific structural form before fitting the SBM to data. Our approach is fundamentally different: we replace \bB entirely through the mechanistic mapping of our \FW given in Eq.~\ref{eq:physics} based on payoff matrix \bU, rationality $\ell$, and the block frequencies $\mathbf{x}$. Specifically, \bB is obtained by normalizing \bM by the number of available node pairs in each block (see Methods). This imposes a hard constraint that maps the network onto a low-dimensional manifold; it is never a free parameter. \ours then tests the precise statistical hypothesis whether the observed block structure is consistent with the restriction that \bB lies on the partner-choice manifold against the alternative that \bB is unconstrained. 
Standard model comparison like description length and Bayesian information criteria operationalize this test. By constraining inference to the game-theoretic manifold the model effectively disentangles the latent strategic causes from their stochastic topological effects. We provide both $K=2$ and $K=3$ implementations, the latter of which supports cyclical network topologies (Figure \ref{fig:3x3_extension}; Section \ref{si:gibbs}).

We fit the model using both variational expectation–maximization (EM) and fully Bayesian Gibbs sampling (MCMC). The EM algorithm provides rapid maximum a posteriori (MAP) estimation: the E-step updates soft block assignments via standard mean-field rules, while the M-step performs bounded optimization over the continuous payoff landscape (via L-BFGS-B, a gradient-based solver, \citep{byrd1995limited}). To guard against local optima, we run the EM from multiple initializations and retain the solution with the highest evidence lower bound (ELBO). However, variational EM is known to underestimate posterior uncertainty and can overfit to degenerate boundary singularities in discrete landscapes. This is the expected consequence of minimizing the exclusive direction of Kullback–Leibler divergence \citep{wolpert2006information}. To capture exact posterior volume and parameter uncertainty, we also implement a fully Bayesian Gibbs sampler, which we initialize from the EM optima. The MCMC routine uses conjugate updates for the latent block assignments and a Metropolis-Hastings step to infer the continuous payoff parameters. By integrating over the posterior, the sampling approach provides a mathematically rigorous evaluation of the model's structural regularization. Both estimators converge within minutes on a standard single-core CPU, requiring no GPU or specialized computing infrastructure.

We compare the constrained and free fits using several complementary criteria suited to both estimation methods (SI \ref{SI:model_comparison}). For the EM point estimates, we evaluate a BIC-type minimum description length ($\Sigma_{\text{BIC}}$), a Laplace-approximated description length ($\Sigma_{\text{Lap}}$), 
and the Vuong likelihood-ratio test \citep{vuong1989likelihood}. For the MCMC posterior draws, we compute fully Bayesian information criteria: the Watanabe-Akaike Information Criterion (WAIC) and Pareto Smoothed Importance Sampling Leave-One-Out Cross-Validation (PSIS-LOO-CV \citep{vehtari2017practical}), which explicitly penalize the effective complexity of the models. When the constrained model achieves a similar or lower description length or a higher expected log pointwise predictive density ($\Delta\text{ELPD}$), the additional structure imposed by the game-theoretic constraint is statistically justified: the network's topology is efficiently described by strategic interaction, and the inferred payoff matrix \bU provides a mechanistic, interpretable summary of the forces shaping connectivity.

\paragraph{From Free Parameters to Mechanistic Content.} 
Constraining \bM to the game-theoretic manifold entails four properties unavailable to unconstrained SBMs. First, the inferred communities are directly interpretable: the inferred payoffs map the observed mesoscopic topology onto the taxonomy of classical $K \times K$ games (e.g., Prisoner's Dilemma, Stag Hunt, Chicken, or Prisoner's Delight), allowing complex empirical topologies to be decomposed into their constituent evolutionary incentives. 
Second, payoff asymmetries anchor block labels to interpretable roles---provider, consumer, exploiter, producer (Table \ref{SI:roles}). These labels go beyond classical game-theory labels like ``cooperator'' and ``defector'' which conflate distinct partner-choice dynamics. Where a standard SBM is invariant under block relabeling \cite{peixoto2021revealing}, the \ours is parametric and semantically anchored. The asymmetric diagonal payoffs act as an inductive bias that fixes which type is which (Proposition~\ref{si:labeled_count}). This makes assignments comparable across restarts, estimators, and datasets, rendering ex post label-alignment unnecessary. 
Third, community membership acquires a concrete definition as \textit{strategic equivalence}: nodes share a block because they face the same payoff row, not merely because they exhibit statistically similar connectivity. This definition is ex ante (not algorithmic), semantically grounded (not just a statistical pattern), falsifiable within the detection bounds we derive, and generative. Fourth, because the model explains edges through a generative mechanism rather than free parameters, it can forecast network topology under novel population compositions, network sizes, and densities, enabling out-of-distribution generalization that no unconstrained statistical model can provide.

\paragraph{Structural Disentanglement and Heterophily.}
Traditional community detection algorithms, such as modularity maximization, typically define communities as assortative groups characterized by dense within-group and sparse between-group connectivity \citep{newman2004finding}. This assumption is often inappropriate for heterophilic or disassortative networks, where meaningful roles may be defined by patterns of interaction across groups rather than within them \citep{newman2006modularity}. Our framework sidesteps this limitation through its definition of communities via strategic equivalence. Because the interaction manifold is parameterized by game payoffs rather than assortative connectivity or structural equivalence \citep{lorrain1971structural}, the model natively captures disassortative dynamics. When the generative payoffs dictate an exploitative or bipartite dynamic (Type X survives by parasitizing Type Y; $M_{XY} \gg M_{XX}$), the model infers an asymmetric off-diagonal affinity without the mathematical penalties suffered by standard homophily-based clustering. 

\paragraph{Validation.} The detectability limits derived above are worst-case conditions assuming homogeneous degree and symmetric block sizes, both of which are relaxed in real-world networks, substantially boosting community signal. Specifically, fitting a degree-corrected SBM allows the node-specific parameters ($\theta$) to absorb the node-level degree imbalance inherent to the rank-one topology, revealing a residual community signal that the standard uncorrected spectral $\mathrm{SNR}$ fails to capture (SI \ref{SI:degree_correction}). Furthermore, \ours improves detectability through an asymmetric diagonal gauge ($a=2/3, d=1/3$), which acts as an explicit symmetry breaker, anchoring the $X \leftrightarrow Y$ label invariance and rendering the off-diagonal payoffs $(b,c)$ identifiable even in otherwise degenerate regions of the phase diagram. 
Pinning the diagonal payoff values $a>d$ removes the continuous gauge freedom but leaves an exact two-fold symmetry: swapping the community labels maps the inferred payoffs $(b,c) \to (c \cdot d/a, b \cdot a/d)$ and reproduces the identical \bM, \bB, and likelihood, so the raw orientation of any single fit is arbitrary. We break this symmetry with the convention $c/b \le a/d$: labeling $X$ the block whose within-group payoff is large relative to its cross-group payoff ($a/c \ge d/b$). The rule uses only inferred payoffs (no external covariates), is scale-free, separates the two mirror solutions  exactly at their symmetry point $c/b = a/d$, and coheres with the asymmetric diagonal pin: $X$ is then both the higher within-group-payoff type and the more homophilous one. 

We validate the \ours on synthetic networks generated from known payoff parameters. First, we plot the per-edge log-likelihood difference between the constrained and free SBMs against the $\mathrm{SNR}$ of the planted network (Figure \ref{fig:panel3_results}a and \ref{si:susceptibility}). The constrained model achieves a higher likelihood than the free model precisely when the $\mathrm{SNR}$ exceeds the Kesten–Stigum detectability threshold ($\mathrm{SNR} > 1$); below this threshold the two models are indistinguishable, confirming that the game-theoretic constraint extracts signal only where signal exists. Below the threshold, rare instances show the unconstrained model achieving a modestly higher likelihood, reflecting overfitting to residual noise rather than genuine structural signal.  This instance-level scatter is consistent with a smooth finite-size detectability transition \citep{young2017finite}. The game-theoretic constraint thus acts as a regularizer: it confines inference to the mechanistically admissible region, suppressing noise-fitting below the detectability threshold while recovering the full signal above it. Second, we find a complementary picture in parameter space (Figure \ref{fig:recovery_heatmap}; SI \ref{SI:parameter_recovery}). We plot a heatmap of the recovery error (Frobenius distance between planted and recovered \bM) over a grid of payoff values $(b, c)$. Recovery error is near zero throughout the detectable regime and spikes precisely at the Kesten–Stigum boundary. This spike is not merely a detection failure: near the phase transition, the generative manifold is maximally sensitive---tiny payoff differences straddle topologically distinct regimes, so small estimation errors in payoff space translate to large errors in recovered structure. \ours thus correctly diagnoses where recovery is inherently ill-conditioned, not just difficult.

\begin{figure}[t!]
    \centering
    \includegraphics[width=1\linewidth]{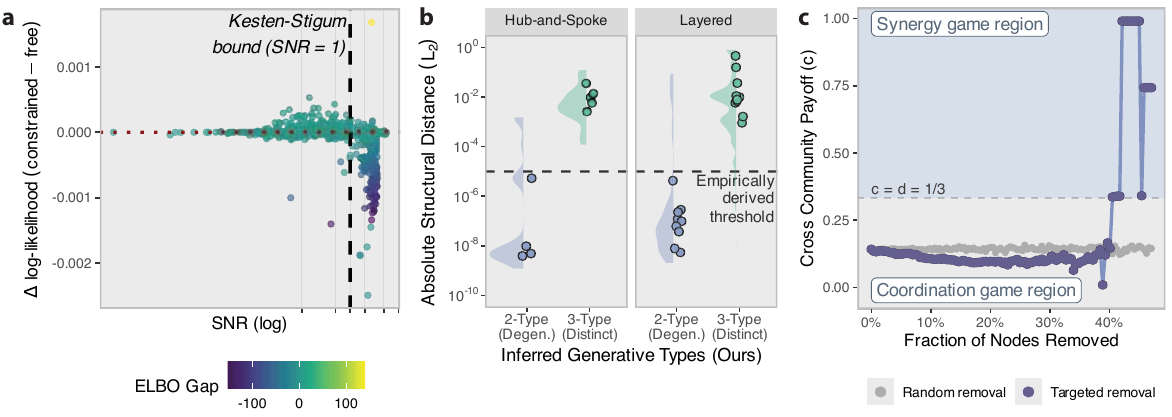}
        \caption{
        \textbf{The \ours~recovers theoretical detectability limits and reveals hidden generative structure in empirical networks.}
        \textbf{a.} Difference in log-likelihood between constrained and free models versus $\mathrm{SNR}$ for synthetic two-community block models (N = 10,000), colored by ELBO gap. The dashed line marks the Kesten–Stigum bound ($\mathrm{SNR} = 1$); the structured model is favored only above threshold. 
        \textbf{b.} Structural distance ($L_2$) of the peripheral block for empirical networks, faceted by Gallagher et al.'s descriptive ``hub-and-spoke'' vs.~``layered'' labels \citep{gallagher2021clarified}. Half-violins show densities across EM restarts; points are network-level medians. Through self-regularization, \ours~autonomously determines effective dimensionality splitting nominally homogeneous descriptive categories by their true generative structure.
        \textbf{c.} Node-removal experiment on the Polblogs network. After removing about 40\% of the highest-degree nodes, the inferred payoffs flip from a coordination game to a synergy game (crossing the $c=d$ boundary); random node removal does not systematically affect the inferred payoffs.
        }
    \label{fig:panel3_results}
\end{figure}

\paragraph{Empirical Analysis.}
To validate the generative constraint, we apply the \ours to canonical empirical datasets across social, biological, and economic domains (SI Table \ref{tab:datasets}). On networks including Zachary's Karate Club, Dolphins, and H.\ sapiens protein-protein interactions (PPI), the model compresses topological structure into a two-parameter game-theoretic manifold that decisively outperforms the unconstrained SBM on description length ($\Delta\Sigma_{\mathrm{Lap}} = -7.1$ to $-27.5$ nats; SI Table \ref{tab:empirical_results}). For Zachary's Karate Club, description-length comparison yields $\Delta\Sigma_{\mathrm{Lap}} = -7.1$ nats in favor of the constrained model (strong evidence on the Kass–Raftery scale; \citep{kass1995bayes}): a two-parameter delight game explains the network better than a free three-parameter $\mathbf{B}$ matrix. The inferred parameters yield immediate structural interpretation: for Zachary, the model infers the two ethnographically documented factions \citep{zachary1977information} and identifies a diagonal-dominant (assortative) regime---both groups preferentially form within-group ties (up to the $\ell$ normalization; see Methods). The inferred cross-group payoffs further reveal that the administrator bloc gains more from interacting with Mr.~ Hi's faction than vice versa, consistent with the sociological account of an informal, relationship-driven network fracturing away from a weaker institutional structure.
When networks do not conform to partner-choice dynamics, \ours~honestly reports null results without hallucinating apparent community structure. On C.~elegans and yeast PPIs the optimizer finds undifferentiated payoffs ($b = d$ and $c = a$): agents are indifferent to connecting within cluster or across, indicating random mixing, giving us a formal ER diagnostic.\footnote{This is also diagnosable from the undefined $\Delta\Sigma_{\mathrm{Lap}}$ caused by the overlapping mixture components which causes the Fisher Information matrix to lose rank.} 
The game-theoretic manifold captures the disassortative pattern but cannot fully accommodate the observed mixing proportions, suggesting additional biological mechanisms (e.g., spatial co-localization, complex-level constraints) beyond pairwise strategic logic as predicted in the biology literature \citep{gavin2002functional}.

We validate the structural integrity of the game-theoretic manifold using the Bayesian MCMC sampler (Table \ref{tab:results_gibbs}). Using international trade network data, \ours infers payoffs of a directed core-periphery with an asymmetric extraction game dynamic (i.e., an unequal exchange game; Dilemma), in which core nodes extract surplus payoffs from trading with peripheral nodes ($b > a$), and peripheral nodes are penalized for trading with the core compared to trade within the periphery ($c < d$). Core-periphery topology alone cannot differentiate between an extractive and a public good relationship. Our model mathematically infers directional exploitation from its topological structure. We find near identical payoffs and block frequencies between the Bayesian and the EM estimates, suggesting the EM point estimates represent the true, stable posterior mass. 
Out-of-sample predictive accuracy criteria (WAIC and PSIS-LOO-CV) confirm that the game-theoretic constraint achieves predictive accuracy equivalent to the unconstrained baseline with fewer effective parameters. We demonstrate expressivity of our model to capture intransitive, non-assortative networks with a cyclic-dominance game with three groups and several other positive and negative cases (Section \ref{si:gibbs}).

\paragraph{Structural Forecast Test.}
Predicting the evolution of complex networks under shifting demographic constraints remains a fundamental challenge in machine learning and network science \citep{goldenberg2010survey}. Standard statistical models, such as SBMs, ERGMs, or Graph Neural Networks, describe static topologies well but fail to predict how edge probabilities rewire when the relative sizes of node classes change. Because \ours~contains the mechanistic engine of network formation, it can do what standard models cannot: zero-shot forecasting under demographic shifts (i.e., out-of-distribution generalization). Constraining the machine learning model with the physics of evolutionary game theory enables inferences about the forces that shape trade such as the capacity constraints and crowding effects that make strategic position costly to change, and the frequency-dependent partner preference that governs interaction given that position. Notice also that our forecasts draw only on network-derived quantities and require no external covariates. A gravity model, by contrast, would require economic masses (GDP) and bilateral distances.

We demonstrate this using 31 annual global trade networks (1994-2024) and perform a temporal prediction experiment that illustrates this capability. Specifically, we take the inferred payoff parameters from a network at time $t$ ($b_t, c_t, x_t$) and feed them into \ours~along with macroscopic population proportions $x$ at time $t+1$. Macroscopic population proportions can be specified either exogenously (e.g., data about trade bloc accession, sanctions, or decoupling) or endogenously from population dynamics (e.g., best-response dynamics, cultural evolution, replicator dynamics, or the Moran process).\footnote{We proxy $x_{t+1}$ using \ours~applied to the network at $t+1$. Note that just knowing node proportions contains zero information about actual network structure and hence does not invalidate the test.} The model then predicts macroscopic network ties ($\mathbf{M}$) as endogenously generated by evolutionary partner-choice (i.e., zero-shot prediction of out-of-sample network rewiring conditional on that new state, generalizing to out of distribution). The model can also predict counterfactual network ties under changing incentives (e.g., tariffs). We compare this to two statistical baselines (random mixing or assortativity inertia) across $t \rightarrow t + \{1, 2, 3\}$ year-pairs using Mean Absolute Error (MAE) between $\hat{\mathbf{M}}_P - \mathbf{M}_{t+1}$ (Table \ref{tab:trade_forecasting}). Forecasting one year ahead, we find substantial +98.7\% improvement over the random mixing baseline (Wilcoxon signed rank $p = 9.3 \times 10^{-10}$) and +4.8\% over the inertia baseline ($p = 0.0249$). We find similar improvements predicting two or three years ahead. Replacing exogenous population proportions with endogenous $x_{t+1}$ predictions from replicator dynamics yields predicted directions of $\Delta\mathbf{M}$ that are uncorrelated with observed change ($p = 0.66$). This aligns with the central premise of our framework: changing strategies is costly and constrained by institutional frictions and capacity constraints (i.e., countries cannot simply decide to be major trading partners and join the network's core), but conditional on one's strategic position, choosing interaction partners follows game-theoretic dynamics. The success of these out-of-distribution forecasts is evidence that global trade networks are not scaling mechanically with economic growth but are actively rewiring according to game-theoretic partner choice.

\paragraph{Self-Regularization: Automatic Block Selection.}
Our framework naturally performs self-regularization through the geometry of its parameter space: when $K$ is over-specified, the partner-choice manifold has no room for a redundant strategy type (Proposition~\ref{si:manifold_collapse}). The model selects its own effective $K$ as a byproduct of fitting, without relying on description length or cross-validation over multiple $K$ values. Because of the geometric manifold collapse, the constrained model cannot invent structural patterns to accommodate higher-order structure. We demonstrate that the analytically derived regularization achieves implicit model selection in practice through this parameter degeneracy as follows. We pre-specify the maximum interaction order ($K=3$) in \ours and pin the diagonal payoff parameters of two peripheral blocks to be identical---thus imposing equivalent internal connection densities $M_{22}$ and $M_{33}$. We illustrate the topological manifold collapse by measuring the absolute $L_2$ distance between $\mathbf{B}[2,]$ and $\mathbf{B}[3,]$. Note that the $L_2$ threshold is only a diagnostic quantity used to visualize the degree of manifold collapse, not a decision rule for model selection per se. Model selection operates implicitly via multiple restarts, selecting the lowest (collapsed or non-collapsed) solution as the data dictates. Previous model selection approaches force researchers to explicitly enumerate and independently fit model classes, such as testing a strict two-block hub-and-spoke model against a layered model across varying discrete depths. Instead, our framework embeds the network within a single, continuous generative parameter space. This manifold collapse creates a flat optimization ridge. Across restarts of the EM fitting procedure, the optimization trajectories naturally bifurcate and the true dimensionality spontaneously resolves via basin selection. The network either collapses into a degenerate $K=2$ manifold---indicating the data only support a $K=2$ game---or sustains a structurally distinct $K=3$ topology. In practical terms this results in two clusters with identical payoffs and non-zero soft weights, but one is empty on hard weights. 

We apply this framework to the 28 empirical communication networks classified as ``hub-and-spoke'' vs.~``layered'' by \citep{gallagher2021clarified} and reclassify them (Figure \ref{fig:panel3_results}b). We pick the degenerating threshold empirically as $L_2 < 10^{-5}$ by analyzing the bimodal pattern of EM restarts (Figure \ref{fig:implicit_model_selection}).
We find both $K=2$ and $K=3$ dimensions exist across the reported classification: 5/9 reported hub-spoke networks show distinct $K=3$ structure and 9/19 reported layered networks collapse to $K=2$.\footnote{We test only $K=3$, but this is sufficient: Peripheral collapse indicates that the data does not support the simplest layered structure of two distinct peripheral roles that Gallagher's typology would predict.}
The descriptive classifications from Gallagher et al.~conflate fundamentally different generative systems. Because traditional taxonomies test networks against rigid structural templates (e.g., nested $k$-cores), they often force topologically diverse 3-type networks into 2-type buckets if they fail to match a specific heuristic. By contrast, \ours~cleanly separates networks driven by only two behavioral types (degenerate $L_2 < 10^{-5}$) from those requiring three distinct strategic types, regardless of whether those interactions form traditional ``layers''.\footnote{We confirmed such degeneracy structure using the full Bayesian (Gibbs) estimator: $K=3$ chains exhibit catastrophic mixing failure (high R-hat, very low ESS) precisely on the networks where EM detects collapse, while $K=2$ chains mix cleanly (R-hat of 1.0 and high ESS). The same geometric degeneracy that EM resolves via basin selection breaks the MCMC posterior sampling.}

\paragraph{Case Study: Deplatforming and the Persistence of Echo Chambers.} 
Intervening on a network's topology does not inherently alter the incentives that generate it (Figure \ref{fig:panel3_results}c). We use the \ours as a policy diagnostic on the political-blog network \citep{adamic2005political}. We treat the fraction of deplatformed accounts as a control parameter and then re-infer the underlying game at each step (details in \ref{si:deplatform}). The system is governed by a coordination game: blogs are rewarded for linking within their own political camp. Under standard topological assumptions, removing the most central hubs should disrupt this structure \citep{albert2000error}. Instead, the \ours reveals that the coordination regime exhibits strong behavioral resilience. Targeted removal of the highest-degree accounts leaves the echo-chamber incentives intact until roughly 40\% of the network is gone. Only then do the inferred payoffs cross the phase boundary to the synergy regime with a mainstream vs.~niche interaction structure. By contrast, indiscriminate removal induces no systematic change in interaction payoff. It fails to systematically force the network off its echo-chamber equilibrium. Thus, deplatforming changes the players but not their roles, and the structure simply reconstitutes around new hubs until the intervention is large enough to force a qualitative shift in behavioral incentives. By mapping this phase boundary, the \ours  tells a practitioner exactly how much targeted disruption is required to overcome a network's behavioral resilience. Ultimately, this resolves a critical gap in intervention design: while foundational theory establishes that complex networks are structurally resilient to random failures but vulnerable to targeted attacks \citep{albert2000error}, \ours reveals they possess a deeper behavioral resilience in strategic equivalence, rendering even targeted structural attacks ineffective unless they simultaneously cross the system's game-theoretic phase boundary.

\section*{Discussion}
Complex networks have long been characterized by what they look like---small-world \citep{watts1998collective}, scale-free \citep{barabasi1999emergence}, modular \citep{girvan2002community}, hierarchical \citep{ravasz2003hierarchical}---but rarely by why those structures emerge. Our work shifts that frame by casting network topology as the fossilized record of strategic interaction, the result of an environment in which changing partners is (relatively) easy but changing strategies is hard. We demonstrate that the infinite, continuous parameter space of all possible two-player strategic interactions collapses strictly into discrete macroscopic regimes, for which we can draw precise phase boundaries and quantify detectability. In the $2 \times 2$ case, the resulting four topologies---assortative, disassortative, and two core-peripheries---that network science has treated as descriptive, are in fact the exhaustive thermodynamic phases of an evolutionary game. Comparing payoffs along one dimension ($a=b$, $c=d$) separates game-theoretic Nash equilibria and comparing across another separates equilibrium topologies ($a=c$, $b=d$; Figure \ref{fig:panel2_physics}a-b). A single Nash equilibrium can generate several distinct topologies, and conversely a single topology can be realized by distinct games. Networks that are structurally identical under a standard block model can arise from completely different causal mechanisms: exploitation, mutual attraction, or coordination pressure are all consistent with the same core-periphery block matrix, but immediately distinguishable in the inferred payoffs. 
The reason is structural: the diagonal anchor supplies type-specific information that relational data lacks, resolving all $K^K$ labeled topologies natively, whereas an exchangeable SBM sees only permutation classes, breaking the cross-dataset comparability that label-matching schemes must then reconstruct after inference \citep{peixoto2021revealing}.

This carries an unexpected implication for game theory. The dilemma region of the payoff space is by far the most studied in the literature \citep{axelrod1981evolution,santos2005scale}, yet it produces only a single, fully determined network topology under partner-selection dynamics as pairwise incentives overdetermine the absorptive outcome. 
The delight region, by contrast, has received comparatively little theoretical attention. Although such games are simple in that they are dominance-solvable and pose no social dilemmas, they can generate rich structural diversity as all four topological networks are seen under the same Nash equilibrium.

Conceptually, the \ours introduces a coarse-graining \citep{gabrielli2025network} of individuals into interaction types. This offers a clear definition of communities as \textit{strategic equivalence} groups---nodes share a block strictly because they face the same payoff row. This provides the first mechanistically grounded definition of community structure, resolving a foundational ambiguity in community detection \citep{fortunato2010community,peixoto2021revealing,ahn2010link}. Traditional definitions instead rely either on heuristic assumptions of homophily and edge density (e.g., modularity maximization \citep{girvan2002community}) or on purely statistical regularities such as identical connection probabilities \citep{peixoto2019bayesian}. 
Crucially, this coarse-graining is not merely consistent across scales but generative. The strategic types infer the evolutionary or economic mechanism that produced the fine-grained network to begin with. 
This also supplies a mechanistic origin for the low-rank hypothesis of complex systems \citep{thibeault2024lowrank}: the rapidly decaying singular values observed empirically across real networks are the spectral shadow of strategic equivalence, with the effective rank set by the number of strategically distinct types and dropping by one each time two types collapse onto a shared payoff row.
Dynamically, a GGE is a mean-field synchronized fixed point where same-type nodes converge to a shared link-strategy through the population field, not through mutual coupling. This is the mean-field product distribution of Product Distribution theory \citep{wolpert2006information}, but exact rather than approximate: strategic equivalence makes same-type agents face identical incentives, supplying at the type level the decoupling that mean-field theory must otherwise assume.
By grounding community structure in a generative mechanism rather than unconstrained statistical parameters or homophilic heuristics, this framework naturally captures both assortative and heterophilic dynamics, supplying the explanatory foundation that prior models lack. 
This also dissolves a long-standing puzzle of why some communities are stable while others vanish \citep{palla2007quantifying}: what persists is not a roster but a payoff row. 

The abstract nature of our framework permits multiple interpretations. The game-graph dynamics describe associations as changing due to interaction payoffs, but this can happen in ways that are not a product of partner choice. For example, our model could arguably apply to microbial communities where associations may be driven by reproductive success rather than individual learning. A revealing case is described by Momeni et al.~\citep{momeni2013elife}, who show that in spatially growing two-population microbial communities, strong bidirectional cooperation uniquely produces a characteristic intermixed spatial pattern, while competition produces segregation and commensalism produces layering. In our model, obligate microbial mutualisms have a payoff matrix of the form $a \ll c$, $b \gg d$ where each type grows faster near complementary partners than near its own type. This places the system in the Disassortative quadrant of the phase diagram, predicting bipartite interaction networks analogous to spatial intermixing. By contrast, competition between strains that share no metabolites places the system near the origin of the phase diagram ($a \approx c$, $b \approx d$). In this case, no type has a payoff-based reason to drive the interaction structure and segregation reflects only kin clustering from local reproduction. 

The phase bounds in the game-to-network mapping have implications for network interventions. When designing metabolic interactions in biotech applications or policy changes that alter interaction payoffs---tariffs, sanctions, subsidies---will only qualitatively restructure a network if they push payoffs past a phase boundary. Phase boundaries can also help identify high-leverage targets (e.g., in protein-protein-interaction) where small perturbations would reorganize interaction patterns, acting as computational target prioritization tool. More broadly, the \ours provides a diagnostic for how collective behavior is organized in multi-agent systems: fitting the model to an observed interaction topology infers whether the underlying dynamic is one of coordination, conflict, or exploitation, without inspecting individual agent behaviors.
The same logic bears on contested social structures: homophilous, polarized clusters (e.g., so-called echo chambers) correspond to the assortative regime, so interventions reshape them only if they move payoffs across a phase boundary rather than merely adjusting incentives within it.
Our framework also serves as a diagnostic for interventions one does not control: using $\ell$ an analyst could ask whether an observed echo chamber reflects curation amplifying a pre-existing homophilous substrate or a genuine move of the payoffs across a boundary.





The $K \times K$ generalization deepens the mechanistic account of our framework. The four-topology result for $2 \times 2$ games is a provable consequence of binary preference decisions, and extending this to $K$ types unlocks qualitatively new structures such as cyclic dominance (e.g., rock-paper-scissors dynamics) which is structurally impossible when agents choose between only two actions. Crucially, this richer expressiveness does not come at the cost of overfitting: when no additional strategic type is warranted, the game-theoretic manifold provably has no room for it, collapsing redundant payoff rows and self-regularizing in the same sense that LASSO does---as a population-level geometric guarantee, with finite-sample estimators inheriting the usual ridge-wandering caveats. No description length metric is necessary to make that decision.

As a result, in the \ours, the right $K$ has a precise definition: the number of strategically distinct types, where ``distinct'' means having a different ordinal best-response. This is theoretically grounded in the number of qualitatively different strategic roles that exist in a system, is falsifiable (within theoretically defined detection bounds), immune to proliferation from random noise because random interactions have no distinct strategic identity, and each community is semantically interpretable. Our empirical application of the \ours to 28 empirical communication networks reveals existing descriptive taxonomies conflate networks with fundamentally different generative dimensionality, suggesting that canonical typologies across the network literature may systematically misattribute mechanism, and that auditing them through \ours would separate genuine behavioral universals from descriptive coarse-graining. 

Identifying whether a network's mesoscale structure is reducible to pairwise interactions is the central open question of the higher-order networks program \citep{battiston2020networks,battiston2021physics,young2021hypergraph}. By fitting an explicitly pairwise generative model, \ours converts that question into a falsifiable diagnostic with four distinct verdicts. 
First, if the constrained model fits well with inferred payoffs indicating a unique equilibrium, the network is fully explained by pairwise strategic logic, ruling out higher-order structure as the most parsimonious explanation. 
Here, $\ell$ provides an additional diagnostic: small $\ell$ indicates purely pairwise interactions while large $\ell$ produces path-dependent preferences in \bM that could be mistaken for higher-order dependencies. What appears to be simultaneous group interaction is exposed by large $\ell$ as sequential pairwise interaction with memory, the structural counterpart of unrolling synergistic integration into sequential steps \citep{riedl2026emergence}.
Second, a clean fit in a multiple-equilibrium region indicates that pairwise logic explains the baseline mechanism, but equilibrium selection requires something else. This directly points to higher-order dependencies as the explanatory mechanism.\footnote{Given multiple replications of the data, \ours could discriminate between random equilibrium selection and systematic selection based on beyond-pairwise mechanisms.}
Third, if the constrained model infers valid interior payoffs but the free model fits better, this points to a strong higher-order candidate. A superior fit by the free model implies the network's mesoscale organization is driven by complex group-level rules or multi-agent dependencies that cannot be compressed into independent dyadic evaluations \citep{riedl2026emergence,riedl2026claims}.
Fourth, if \ours infers undifferentiated payoffs, this is a clear signal that no pairwise strategic signal is present at all. This is a necessary but not sufficient condition for a truly random network, or it could point to a strict higher-order structure that enforces pairwise uniformity (like XOR) and remains invisible to dyadic projections. The two cases are distinguishable by information decomposition: XOR-type structure is purely synergistic information about the topology, which is zero in every dyadic projection but detectable from the joint distribution \citep{riedl2026claims}. These modes assume single-type membership; pervasive overlap \citep{ahn2010link}, where each node belongs to many communities, is a separate source of pairwise misfit not reducible to higher-order interaction.
Together, these four diagnostic modes define a causal horizon of what pairwise strategic interaction can and cannot explain. 


Our approach shares deep theoretical roots with the Hidden Metric Space framework \cite{krioukov2010hyperbolic}. In geometric graph theory, topological structure emerges from distances in a latent hyperbolic manifold, while in our work ``distance'' is payoff-based, and $\ell$ plays the same role as inverse temperature in both frameworks. The analogy extends to renormalization. Geometric renormalization coarse-grains a network by aggregating nodes nearby in the latent space and rescaling the embedding so that the connection probability is left invariant \citep{gabrielli2025network}; the \ours aggregates nodes that are not merely nearby but \textit{identical} in payoff space, with the same connection-probability form preserved under the partner-choice map. The degree correction ($\theta$) separates the ``popularity'' dimension from the ``strategy'' dimension, mirroring the radial/angular decomposition in hyperbolic models \citep{papadopoulos2012popularity}, or the leading singular component of real networks \citep{thibeault2024lowrank}. A natural extension would combine both mechanisms, allowing edges to be driven jointly by spatial proximity and incentives. This could be particularly relevant for protein interaction networks where binding depends on both co-localization and functional complementarity.

Our partner-choice equations extend naturally to $k$-player games, where the mixing matrix generalizes to a mixing tensor over group compositions, yielding a principled test for whether observed higher-order structure reflects genuine multi-body strategic logic or is reducible to pairwise interactions \citep{battiston2020networks}.


\section*{Materials and Methods}
\subsection*{Model}
\label{methods:model}

We model pairwise interactions in a $K$-type population as a $K \times K$ symmetric game. Let $U_{ij}$ represent the payoff to strategy $i$ from interacting with a type-$j$ partner. A population state is a vector $x = (x_1, \ldots, x_K)$ where $x_i$ is the frequency of type $i$. Let $q_{ij}$ denote the probability that a type-$i$ individual directs an outgoing link to type $j$, with $\sum_j q_{ij} = 1$ for each $i$.

A \textit{Game-Graph Equilibrium} (GGE) is a link-probability profile $(q_{ij})$ such that for all $i$,
\begin{equation}
    q_{ij} > 0 \quad \text{only if} \quad
    U_{ij} \geq U_{ik} \quad \text{for all } k.
    \label{eq:game-graph-equilibrium}
\end{equation}

We formulate network dynamics as driven by an affinity between types based on the game payoff $U$ and a background payoff $w$ (ensuring positivity):
\begin{equation}
\alpha_{ij} = U_{ij} + w.
    \label{eq:affinity}
\end{equation}

The \textit{Game-Graph Dynamics} are given by
\begin{equation}
q'_{ij} = q_{ij} \cdot \frac{\alpha_{ij}}{\sum_k q_{ik} \alpha_{ik}}.
\label{eq:graph-dynamics}
\end{equation}
This is a replicator dynamic applied to link probabilities: $q_{ij}$ grows when affinity for type $j$ exceeds the current average, and shrinks otherwise. We assume interactions begin with random mixing and the dynamics operate for $\ell$ iterations. GGEs are rest-points of this dynamic ($\ell \to \infty$), and every asymptotically stable point is a GGE. The same is true of GGEs for any monotone selection dynamic (SI Section~\ref{si:analytical}, Corollary Case~1). This generalizes to hold for $K>2$ (SI Section~\ref{si:analytical}.

The joint interaction frequency $M_{ij} = x_i\,q_{ij}$ is the quantity inferred from network data and used in the SBM formulation. For $\ell$ learning generations, in $2 \times 2$ games, the resulting joint probabilities of type pairs are: 
\begin{equation}
\begin{aligned}
    M_{XX} &= \frac{x^2\,a^{\ell}}{x\,a^{\ell} + y\,c^{\ell}},
    &\quad
    M_{XY} &= \frac{x y\,c^{\ell}}{x\,a^{\ell} + y\,c^{\ell}}, \\[6pt]
    M_{YX} &= \frac{x y\,b^{\ell}}{x\,b^{\ell} + y\,d^{\ell}},
    &\quad
    M_{YY} &= \frac{y^2\,d^{\ell}}{x\,b^{\ell} + y\,d^{\ell}}.
\end{aligned}
\label{eq:physics}
\end{equation}
where $x$ and $y$ are the frequencies of strategy $X$ and $Y$ respectively and $a,b,c,d$ are the payoffs of the game (Fig.~\ref{fig:panel1_conceptual}; derivation in SI Section~\ref{si:analytical}).

\subsection*{Stochastic Block Model}
We consider an unweighted, directed, simple network with $N$ nodes and adjacency matrix $\mathbf{A}$. Each node $i$ is assigned to one of $K$ blocks, recorded in the vector $\boldsymbol{x}$ where $x_i = r$ denotes membership in block $r$. Block sizes are $\mathbf{n} = (n_1, \ldots, n_K)$ with corresponding frequencies $x_r = n_r / N$. These block proportions are identified with the population-state vector $x = (x_1, \ldots, x_K)$ of the partner-choice model: nodes in block $r$ are agents of strategic type $r$. Connection probabilities between blocks are specified by an asymmetric $K \times K$ matrix \bB, so that the likelihood factorizes as
\[
P(\mathbf{A} \mid \boldsymbol{x}, \mathbf{B}) = \prod_{r,s} B_{rs}^{\,m_{rs}}\,(1 - B_{rs})^{\bar{m}_{rs} - m_{rs}},
\]
where $m_{rs}$ is the number of observed directed edges from block $r$ to block $s$, and $\bar{m}_{rs} = n_r n_s$ is the maximum number of possible directed edges (or $n_r(n_r - 1)$ for $r = s$). In the unconstrained (free) SBM, each entry of \bB is estimated independently.

\subsection*{Game-Theoretic Constraint}
Unlike prior-based approaches that restrict the block matrix to an allowed region of parameter space we impose a hard constraint: \bB is not inferred directly but is computed deterministically from the partner-choice physics in Eq.~\ref{eq:physics}. The diagonal entries of \bU are fixed by a gauge condition. To fix the residual community-label orientation we adopt the convention
$c/b \le a/d$: the block labeled $X$ is the one whose within-group payoff, relative to its cross-group payoff, is the larger of the two ($a/c \ge d/b$). The $K(K{-}1)$ off-diagonal entries are the only free continuous parameters of the model. Because the off-diagonal payoffs are unconstrained within their bounds, the manifold of reachable \bB matrices spans all connectivity patterns that the $K$-type game can generate: the model discovers the structure rather than having it prescribed.

Given the payoffs \bU, block frequencies $\mathbf{x}$, and rationality parameter (inverse temperature) $\ell > 0$, the joint edge-type distribution is given by
\[
M_{rs} = \frac{x_r\, x_s\, U_{rs}^{\,\ell}}{Z_r}, \qquad Z_r = \sum_{s=1}^{K} x_s\, U_{rs}^{\,\ell}.
\]
Given the softmax selection rule in Eq.~\ref{eq:physics}, $\sum_{rs} M_{rs} = 1$ by construction. 

Because the partner-choice dynamics are inherently asymmetric, \bM acts as a directed, bidirectional encoder of expected edge fractions. We obtain the per-dyad generative parameters \bB by distributing these global fractions across the available node pairs, utilizing a link function appropriate for the network's topology and likelihood.

For \textbf{directed networks}, the mapping is strictly asymmetric:
\[
B_{rs} = \frac{|E|\, M_{rs}}{n_r n_s}.
\]
For \textbf{undirected networks}, the asymmetric flows must be combined. Under a Degree-Corrected (Poisson) SBM, $\mathbf{B}$ represents a base expected rate. Because the sum of two independent Poisson processes is a Poisson process, we use the linear sum:
\[
B_{rr} = \frac{2\,|E|\, M_{rr}}{n_r(n_r - 1)}, \qquad B_{rs} = \frac{|E|\,(M_{rs} + M_{sr})}{n_r\, n_s} \quad (r \ne s).
\]
Conversely, under a standard (Bernoulli) SBM, \bB represents a strictly bounded probability. To prevent boundary violations when combining dense flows, we use the probabilistic ``OR'' link function, asserting an undirected edge forms if either $r$ chooses $s$ or $s$ chooses $r$:
\[
B_{rs} = 1 - \left(1 - \frac{|E|\, M_{rs}}{n_r n_s}\right)\left(1 - \frac{|E|\,M_{sr}}{n_r\, n_s}\right).
\]
where $|E|$ is the total number of edges. The composite mapping $\bU \mapsto \bM \mapsto \bB$ is deterministic throughout: each setting of the payoff parameters \bU (and optionally $\ell$) uniquely determines the joint edge-type distribution \bM via the partner-choice equations, from which \bB is then computed through the appropriate link function. The adjacency matrix \bA is then sampled from \bB. This reduces the free parameters from $K(K{+}1)/2$ in the unconstrained SBM to at most $K(K{-}1)$ and ensures that every fitted \bB admits a game-theoretic interpretation. Mathematically, \bB acts as the kernel of a step-function graphon \citep{lovasz2012large}, encoding the generative rules of the network in a representation that is invariant to discrete finite size. Because these rules are parameterized by payoffs rather than raw edge densities, the same $\phi$ generates the correct \bB at any $N$ and any block composition vector $x$---a scale-free parameterization that enables reliable out-of-distribution forecasting of network topologies.

\subsection*{Estimation and Model Comparison}
We implement estimation procedures for both variational Expectation-Maximization (EM) and fully Bayesian estimation via MCMC (SI \ref{si:EM} and \ref{si:gibbs}). We compare the constrained and free SBMs using multiple criteria, evaluated from converged fits using hard class assignments $\hat{\boldsymbol{x}}$ (obtained by $\arg\max_r q_{ir}$ for variational EM) and full posterior draws (via MCMC sampling): Description Length (BIC and Laplace), Vuong Test, and Bayesian out-of-sample prediction (WAIC and PSIS-LOO-CV; SI \ref{SI:model_comparison}).

\paragraph{Parameter Counts and Identification.}
For $K$ blocks, the constrained model has $K(K{-}1)$ free parameters (the off-diagonal payoff entries). For undirected networks, the free SBM has $K(K{+}1)/2$ parameters (the upper triangle of $\mathbf{B}$); for directed networks, the free SBM has $K^2$ parameters. Both models additionally share the partition encoding cost $L(\hat{\boldsymbol{x}})$. The degree-corrected variants add $N$ propensity parameters to each model. 
The inverse temperature $\ell$ is fixed rather than inferred to break scaling degeneracy and ensure the Fisher information matrix is positive-definite. Fixing $\ell$ is standard scale normalization \citep{train2009discrete}, not a substantive assumption: as in any quantal-response model \citep{mckelvey1995quantal}, $\ell$ and the payoff magnitude are jointly unidentified from a single network, because the likelihood depends on the payoffs only through the $\ell$-powered ratios $(a/c)^{\ell}$ and $(d/b)^{\ell}$. The direction of the payoff vector is therefore identified independently of $\ell$---the signs of the within-type comparisons, which fix the equilibrium phase, together with ratios such as $\ln(a/c)/\ln(d/b)$---while its magnitude is not; inferred payoff levels and any cross-network comparison of magnitudes are accordingly conditional on the chosen $\ell$, whereas the phase classification is gauge-invariant.

\subsection*{Synthetic Experiments}
Both synthetic validations use $K = 2$ blocks with the rationality parameter fixed at $\ell = 3$, network size $N = 500$, and target mean degree $\langle k \rangle = 10$ (setting $|E| = 2{,}500$). Graphs are sampled from the planted SBM defined by the partner-choice equations, retaining only instances where the largest connected component contains at least 90\% of nodes. Each fit uses 10 random restarts.

\section*{Acknowledgments}
We thank Albert-L\'aszl\'o Barab\'asi, Erik Weiss, and Ahmad Ahmadi for helpful comments.

\bibliographystyle{unsrtnat}
\bibliography{references}

\appendix
\input{SI}

\end{document}

%% file: SI.tex
\clearpage
\setcounter{page}{1}
\setcounter{figure}{0}
\setcounter{table}{0}
\setcounter{equation}{0}
\setcounter{section}{0}
\renewcommand{\thepage}{S\arabic{page}}
\renewcommand{\thefigure}{S\arabic{figure}}
\renewcommand{\thetable}{S\arabic{table}}
\renewcommand{\theequation}{S\arabic{equation}}
\renewcommand{\thesection}{S\arabic{section}}

\addcontentsline{toc}{section}{Supplementary Information} 
\part{Supplementary Information} 
\parttoc 


\clearpage
\section{Relationship to Prior Frameworks} \label{si:related_work}

Our work connects in particular with a large literature on Stochastic Block Models (e.g., \citep{karrer2011stochastic,peixoto2017nonparametric,young2018universality}). Peixoto \citep{peixoto2023descriptive} establishes that Bayesian inference of the stochastic block model provides statistically principled community detection, but the inferred block matrix remains a statistical summary of edge densities---it answers \textit{whether} mesoscale structure is warranted, not \textit{what mechanism} produced it. We close this gap by deriving the block matrix analytically from a frequency-dependent partner-choice process, giving the block matrix a causal interpretation for the first time: each entry is the equilibrium probability that an agent of one type selects a partner of another type, determined by payoffs, population composition, and selection intensity. While standard SBMs offer a generative model of the network via its block matrix, we provide a semantically interpretable generative mechanism of the block structure itself. That is, we replace the SBM's unstructured prior over the block matrix with a deterministic derivation from interpretable game-theoretic parameters. Furthermore, temporal SBMs (TSBM) need the temporal contrast across multiple time periods \citep{chanpuriya2022direct} while our generative model can predict zero-shot from a single observed network. 

Our work complements the broader literature applying game theoretic concepts to individual-level strategic link formation  \citep{jackson1996strategic,bala2000noncooperative,overgoor2019choosing,kuleshov2015inverse,goyal2005network,jackson2002evolution,skyrms2000dynamic}. The connection of our paper with \citep{jackson1996strategic} is particularly relevant. In that work, link formation is governed by a value function $v(g)$ which assigns a value to each possible graph, an allocation rule distributes that worth to players, and a network is pairwise stable if no player gains by severing a link and no pair gains by adding one. That formulation is extremely general: $v(g)$ can in principle encode any payoff structure, including game-theoretic interactions. However, its generality means it cannot predict which class of network a given incentive structure will produce; it can only evaluate whether a specific proposed network is stable against single-link deviations. Our framework differs in three respects that are not merely notational but are what enable our analytical results: agents choose interaction partners rather than links, with the network emerging as a population-level equilibrium of frequency-dependent selection; the equilibrium concept is a rest-point condition for the full class of monotone selection dynamics rather than a Nash-style deviation condition on individual links; and payoffs are local primitives of pairwise interaction rather than properties of the global graph. These differences are precisely what yield the topology classification, phase boundaries, detectability bounds, constrained inference via the \ours, and out-of-distribution forecasting---results that require the population-level equilibrium equations (Eqs. 4–7) and have no analogue in individual-level link-deviation logic. The two frameworks are thus complementary: \citep{jackson1996strategic}'s pairwise stability answers ``is this particular network stable?''; our game-graph equilibrium answers ``what kind of network do these payoffs produce?''

Our framework is also related to Product Distribution theory \citep{wolpert2006information}, which also uses the maxent/Boltzmann structure to bridge bounded-rational game theory and mean-field statistical physics.

\clearpage
\section{Analytical Results}

\input{SI_analytical_results}

\clearpage
\input{SI_estimator_EM}

\input{SI_model_comparison}

\clearpage
\input{SI_detectability.tex}

\clearpage
\section{Game Taxonomy and Role Labels}
The \FW provides distinctions among payoffs in the Delight quadrant as well as distinctions among games leading to core-periphery networks that offer nuanced interpretation of roles given the semantics of their payoffs (Table \ref{SI:roles}).

\input{SI_game_taxonomy_table.tex}

\section{Susceptibility Analysis} \label{si:susceptibility}
Here are additional details to the susceptibility analysis shown in the main text (Figure~\ref{fig:panel3_results}a). We draw 10,000 samples with payoff parameters $b, c \sim \text{Unif}(0.01,\, 0.99)$ and population frequency $x \sim \text{Unif}(0.05,\, 0.95)$. For each sample we fit both the constrained and free SBMs and record the per-edge log-likelihood difference and the spectral $\mathrm{SNR}$ of the planted block matrix. The $\mathrm{SNR}$ follows the Kesten–Stigum criterion \citep{decelle2011inference}: we construct the reduced matrix $C_{rs} = \sqrt{n_r\, n_s}\, B_{rs}$ and define $\textrm{SNR} = |\lambda_2| / \sqrt{\langle k \rangle}$, where $\lambda_2$ is the subdominant eigenvalue and $\langle k \rangle$ the mean degree. Community structure is information-theoretically detectable when $\textrm{SNR} > 1$.

\clearpage
\section{Parameter Recovery} \label{SI:parameter_recovery}
We test whether our \ours~can successfully recover the network structure implied by planted payoffs using synthetic data. Specifically, we vary $b$ and $c$ over a uniform $30\times30$ payoff grid, use the \FW~ to produce the implied edge-type distribution matrix $\mathbf{M}$, generate synthetic graphs with balanced communities ($x=0.5$), $N=500$ nodes, target mean degree $\langle k \rangle=10$, and fixed $\ell = 3.0$. We convert $\mathbf{M}$ to $\mathbf{B}$ using the density convention $E = \langle k \rangle \times N / 2$ from which we then sample graphs as independent Bernoulli draws per block. We then fit \ours~to the synthetic graph using EM restarts (10) to recover $\hat{\mathbf{M}}$. We measure structural recovery using the Frobenius norm

\begin{equation}
\left\| \mathbf{M}_{\mathrm{recovered}} - \mathbf{M}_{\mathrm{true}} \right\|_F
=
\sqrt{\sum_{i,j}
\left(
\mathbf{M}_{ij}^{\mathrm{rec}} - \mathbf{M}_{ij}^{\mathrm{true}}
\right)^2}
\label{eq:matrix_norm}
\end{equation}
of the better-aligned of the two label permutations, averaged over $B=100$ replications per cell.

Results are in Figure \ref{fig:recovery_heatmap}. Recovery error $\|\hat{\mathbf{M}} - \mathbf{M}\|_F$ is near zero throughout the detectable regime and spikes precisely at the Kesten–Stigum boundary, confirming that the constrained SBM recovers the planted partner-choice parameters wherever the community signal is information-theoretically identifiable.

\begin{figure}[h!]
    \centering
    \includegraphics[width=.6\linewidth]{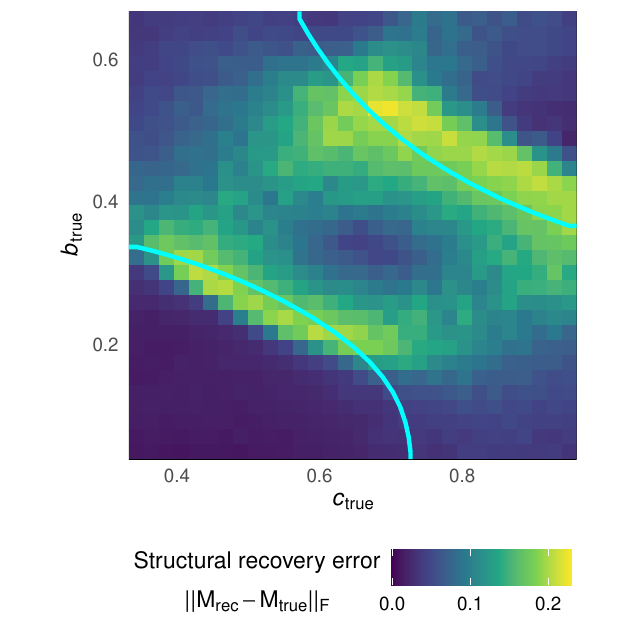}
    \caption{
        \textbf{Recovery error ($\|\hat{\mathbf{M}} - \mathbf{M}\|_F$) and $\mathrm{SNR}=1$ phase boundary.} 
        We restrict to the delight quadrant ($b<a,c>d$), the only region containing the random-mixing fixed point $(b,c)=(d,a)$. Crucially, parameter recovery error highlights the phase boundary, not the unidentifiable regime itself. Deep inside the unidentifiable phase---such as exactly at the random-mixing fixed point---the \ours trivially but accurately recovers the uniform random-mixing matrix, yielding low error despite the true community labels being information-theoretically unidentifiable. The metric is insensitive where $c \approx a$ because the X-row of $\mathbf{M}$ approaches its random-mixing value. Instead, recovery error spikes strictly along the $\mathrm{SNR} = 1$ critical transition boundary, where the latent signal is strong enough to trigger detection but heavily obscured by noise, demonstrating that \ours accurately recovers the underlying statistical physics of the phase transition.}
    \label{fig:recovery_heatmap}
\end{figure}

\input{SI_causal_emergence}
\label{causal_emergence}

\clearpage
\section{Datasets}
\input{table_datasets}

\clearpage
\section{Empirical Results}\label{SI:empirical_results}
The table below shows the main empirical results, using the EM estimator. 

\paragraph{Sensitivity to $\ell = 3.0$.} Results in the table above use $\ell$ fixed at $3.0$. This value represents systems that are neither fully fossilized ($\ell \rightarrow \infty$) nor random mixing ($\ell = 0$). We find the fitting procedure is not very sensitive to the exact value of $\ell$. To validate this we implemented a scan procedure that probes different values of $\ell$ using Brent's method with $\ell \in [0.5, 10.0]$ and selects the best-fitting model based on ELBO. We find virtually identical fits for different values of $\ell$. For example, the procedure identifies best-fitting $\ell = 3.99$ for the Zachary dataset with spread between highest and lowest value on the order of $10^{-5}$\%.

\begin{landscape}

\input{table_results_datasets}
\end{landscape}

\clearpage
\input{SI_gibbs}

\clearpage
\section{Forecasting}
Table \ref{tab:trade_forecasting} shows the trade network forecasting results.
\input{table_trade_forecasting}

\section{Model Selection through Topological Manifold Collapse}
As a visual diagnostic, Figure \ref{fig:panel3_results}b shows a threshold dividing the peripheral distances between collapsed $K=2$ models and distinct $K=3$ models. We derive this threshold empirically by plotting peripheral distances across many restarts and pick a point dividing the bimodal distribution (Figure \ref{fig:implicit_model_selection}).

\begin{figure}[h!]
    \centering
    \includegraphics[width=1\linewidth]{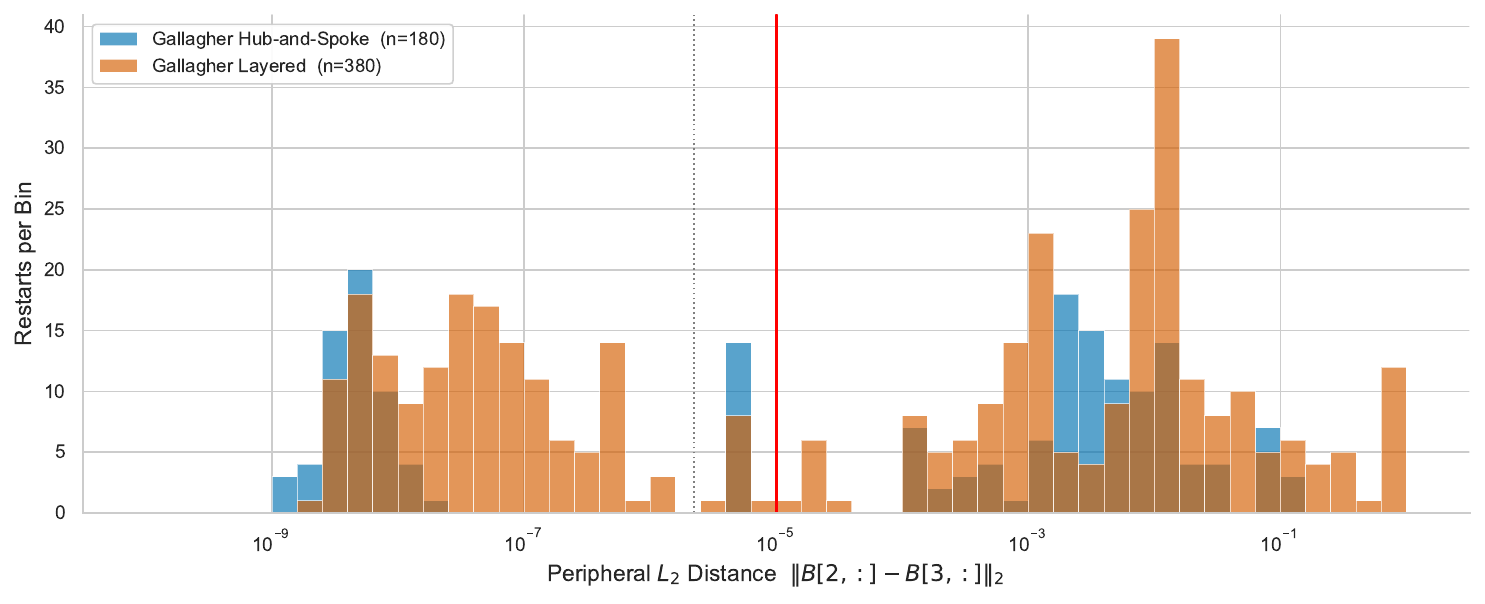}
    \caption{
    \textbf{Distribution of distances in peripheral blocks across EM restarts.}
    We fit 560 partner-choice EM restarts (180 Hub-and-Spoke, 380 Layered) on the Gallagher Communications networks, and plot the peripheral $L_2$ distances $|B[2,:]-B[3,:]|_2$ on a log axis. The distribution is clearly bimodal: a low-distance mode near the numerical floor ($10^{-9}$–$10^{-6}$), where the two peripheral $B$-rows collapse onto each other (degenerate $K=2$-effective basin), and a high-distance mode ($10^{-3}$–$10^{-1}$) where they remain distinct (genuine 3-block solution). The bimodality appears in both Gallagher labels, ruling out a Hub-vs-Layered artifact. The valley between the modes motivates the visual collapse threshold of
  $10^{-5}$ (red line).}
    \label{fig:implicit_model_selection}
\end{figure}

\clearpage
\section{Deplatforming Experiment}\label{si:deplatform}
We perform an intervention diagnostic experiment using the PolBlogs dataset from \citep{adamic2005political}. The network shows linking pattern of political blogs with two dense clusters. Clusters map on political leaning \citep{adamic2005political}. We use the directed version of the network (Table \ref{tab:empirical_results} shows the undirected network). \ours fits the data well, decisively outperforming the free SBM ($\Delta\Sigma_{\mathrm{BIC}} = -3.2$ nats, $\Delta\Sigma_{\mathrm{Lap}} = -25.2$ nats).

For the experiment, we sort nodes by degree and remove them one by one. After every step, we re-fit \ours and plot the inferred $c$ payoff (with the usual asymmetric diagonal pinned at $a=2/3$ and $d=1/3$). While the convention $c/b \le a/d$ pins the $X$ cluster to ``liberal'' as the more homophilous group on the full network, this pinning is not guaranteed once we start changing the network by removing nodes. Hence, we pin $X{=}$``liberal'' semantically using the ground-truth labels provided in the dataset to ensure the inferred $c$ payoff plotted in Figure \ref{fig:panel3_results}c speaks to the cross-community payoff of the same cluster across all node-removal steps. 

The experiment demonstrates that node removal does not easily alter the inferred payoff matrix. Removing nodes changes the block frequencies and therefore the strategic structure the network reveals, not the underlying incentives. The transition from a coordination game to a synergy game should thus be interpreted as a change in identifiability. Once the central within-camp connectors are removed, the surviving network can no longer be explained by within-camp-preferential linking and is best fit by a game in which cross-camp ties are rewarded. This is precisely why deplatforming is a limited lever to break up echo-chambers: because the incentives are untouched, the coordination structure would re-form under the same payoffs, and durable restructuring requires an intervention that moves the payoffs themselves across a boundary rather than merely removing the nodes that evidence the current game. This offers useful insights for interventions on deplatforming and echo chambers \citep{mccabe2024post,cinelli2021echo}. The targeted-versus-random contrast confirms the transition is a systematic property of which nodes carry the strategic structure, not finite-sample noise. Random removal, which perturbs composition without targeting the coordinators, never produces an ordered crossing. The case study therefore demonstrates the difference between altering a network's realized topology and altering the incentives that generate it---the distinction at the heart of our intervention analysis.

%% file: SI_analytical_results.tex
\subsection*{Derivation of the partner-choice equations from Game-Graph Dynamics}
\label{si:analytical}

The Game-Graph Dynamics are governed by the affinity function (Eq.~\ref{eq:affinity}) and the replicator update rule (Eq.~\ref{eq:graph-dynamics}; see Materials and Methods). Here we derive the closed-form partner-choice equations that result from iterating this dynamic for $\ell$ generations starting from random mixing.

\paragraph{Closed form.}
Affinity $\alpha_{ij} = U_{ij} + w$ is constant with respect to $q_{ij}$. The replicator (Eq.~\ref{eq:graph-dynamics}) therefore has constant fitnesses, and its closed-form solution after $\ell$ generations starting from random mixing $q_{ij}^{(0)} = x_j$ is
\begin{equation}
q_{ij}^{(\ell)} = \frac{x_j (U_{ij} + w)^{\ell}} {\displaystyle\sum_k x_k (U_{ik} + w)^{\ell}}.
\end{equation}
With $w > 0$ the solution involves $(U_{ij} + w)^\ell$. Since $w$ is a uniform additive shift, it does not affect the ordinal ranking of payoffs ($U_{ij} + w > U_{ik} + w \iff U_{ij} > U_{ik}$) and therefore leaves all phase boundaries and topology classifications unchanged. Without loss of generality, we set $w = 0$ for structural analysis. The joint mixing probability $M_{ij} = x_i q_{ij}^{(\ell)}$ then simplifies to
\begin{equation}
M_{ij} = \frac{x_i x_j U_{ij}^{\ell}}{\displaystyle\sum_k x_k U_{ik}^{\ell}}.
\label{eq:mixing_general}
\end{equation}
For the $2\times 2$ case, Equation~\ref{eq:mixing_general} gives Equations~\ref{eq:physics}. The parameter $\ell$ counts the number of generations of partner-choice learning starting from random mixing. At $\ell = 1$ the probabilities are payoff-proportional; as $\ell \to \infty$ the system concentrates all mass on the highest-payoff partner type.

\subsection*{Results for $K = 2$ games}

The propositions below establish which network topologies arise from \textit{two-type games} for any monotone selection dynamic.

A \textit{monotone selection dynamic} on link probabilities is any continuous dynamic, in discrete or continuous time, under which link weights to most-preferred partners never fall and grow whenever improvement is possible: if $\alpha_{ij} \geq \alpha_{ik}$ for all $k$, then $q_{ij}$ is non-decreasing along every trajectory, and increases strictly from any state where $q_{ij} > 0$ and $\alpha_{ij} > \bar{\alpha}_i$. The replicator Eq.~\ref{eq:graph-dynamics} is one such dynamic, as is any payoff-monotone imitative dynamic and the Smith dynamic \citep{hofbauer1998evolutionary,samuelson1992evolutionary}.


\vspace{12pt}
\begin{proposition}[Four-topology classification]
\label{si:result1}
For any $2 \times 2$ game, the asymptotically stable rest points of any monotone selection dynamic on link probabilities fall into exactly four topological classes: assortative (Pair Same), disassortative (Pair Different), X-absorptive (Pair X), and Y-absorptive (Pair Y).
\end{proposition}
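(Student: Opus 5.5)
The state space factors as a product of two simplices, one per row: the type-$X$ row $(q_{XX},q_{XY})$ and the type-$Y$ row $(q_{YX},q_{YY})$. Since affinities $\alpha_{ij}=U_{ij}+w$ are constant in $q$, each row evolves according to its own fixed ranking of payoffs. The plan is therefore to analyze each row independently, showing that an asymptotically stable rest point must place all mass on a best partner of that row. A topology is then determined by a pair of choices, one per row, and the $2\times 2=4$ combinations of choices are exactly the four named classes.

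\textbf{Step 1 (strict case, row $X$).} Suppose $a>c$, so $X$ strictly prefers $X$. By the definition of a monotone selection dynamic, $q_{XX}$ is non-decreasing along every trajectory. It increases strictly whenever $q_{XX}>0$ and $a>\bar\alpha_X=q_{XX}a+q_{XY}c$, which holds exactly when $q_{XY}>0$. Hence every interior state is non-stationary, and the rest points of the row are the two vertices.

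The vertex $q_{XY}=1$ is not asymptotically stable. Any nearby interior point has $q_{XX}$ non-decreasing, so it cannot return to $q_{XX}=0$. The vertex $q_{XX}=1$ is the unique candidate, and I argue it is attracting as follows. $q_{XX}$ is monotone and bounded, so it converges. By continuity, the limit is a rest point with $q_{XX}>0$, which forces $q_{XX}=1$. Lyapunov stability follows from the monotonicity of $q_{XX}$. The case $a<c$ is symmetric, and row $Y$ is handled identically, comparing $b$ with $d$.

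\textbf{Step 2 (boundary case).} When $a=c$ (or $b=d$), the row is indifferent. Every state of the row may then be a rest point, or the dynamic may be degenerate. Here I would argue that asymptotic stability fails: the replicator, and generally any dynamic with $\bar\alpha_i=\alpha_{ij}$, has a continuum of rest points, and a continuum of neighbouring rest points rules out asymptotic stability of any single one.

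Strictly, the monotone-dynamic definition does not force indifferent rows to be stationary. I expect this to be the main obstacle. I would handle it by noting that the proposition's classification is stated for the generic open quadrants separated by the phase boundaries $a=c$ and $b=d$. On the boundaries, either no asymptotically stable point exists, or any stable point still lies in the set of GGE profiles from Eq.~\ref{eq:game-graph-equilibrium}. Whenever a row's stable point is pure, it falls into one of the four classes; a mixed row is then excluded by the continuum-of-rest-points argument.

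\textbf{Step 3 (assembly).} Because the dynamics decouple across rows, a product of asymptotically stable row states is asymptotically stable for the joint system, and conversely each row of a jointly asymptotically stable point must be row-wise asymptotically stable. Combining row $X\in\{XX,XY\}$ with row $Y\in\{YX,YY\}$ yields four topologies, read off from $M_{ij}=x_iq_{ij}$:
\begin{itemize}
\item $(XX,YY)$: assortative;
\item $(XY,YX)$: disassortative;
\item $(XX,YX)$: X-absorptive;
\item $(XY,YY)$: Y-absorptive.
\end{itemize}
These are determined by the signs of $a-c$ and $b-d$, which also establishes the quadrant correspondence used in Proposition~\ref{si:result2}.
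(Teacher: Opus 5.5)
Your proposal is correct and takes essentially the paper's route: one binary preference per row, the strictly preferred corner as the only asymptotically stable state, the Cartesian product of the two choices giving the four classes, and the loci $a=c$, $b=d$ set aside, with your Step 1 supplying the convergence and instability arguments that the paper only asserts. The obstacle you flag in Step 2 actually dissolves: if $a=c$, both partners are most-preferred for type $X$, so $q_{XX}$ and $q_{XY}=1-q_{XX}$ are both non-decreasing and hence constant along every trajectory; a perturbation of that row therefore never returns, so no point is asymptotically stable on the boundaries and the statement holds for every game, not just generic ones (and, since this argument and your Step 1 work trajectory-wise, you do not need the row-decoupling assumed in Step 3).
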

\begin{proof}
At a rest point of any monotone selection dynamic, $\alpha_{ij} = \bar{\alpha}_i$ for all $j$ with $q^*_{ij} > 0$ and $\alpha_{ij} \leq \bar{\alpha}_i$ for all $j$ with $q^*_{ij} = 0$. A corner rest point $q^*_{ij^*} = 1$ is asymptotically stable whenever $\alpha_{ij^*} > \alpha_{ij}$ for all $j \neq j^*$.

In the $2\times 2$ case each type faces a single binary preference: X prefers X ($\alpha_{XX} > \alpha_{XY}$) or Y ($\alpha_{XX} < \alpha_{XY}$); Y prefers X ($\alpha_{YX} > \alpha_{YY}$) or Y ($\alpha_{YX} < \alpha_{YY}$). The Cartesian product of these two binary decisions gives exactly four topological classes: Pair Same (assortative), Pair Different (disassortative), Pair X (X-absorptive), and Pair Y (Y-absorptive). This enumeration depends only on the sign structure of $\alpha_{ij} - \alpha_{ik}$, not on the specific dynamic.

A mixed rest point would require $\alpha_{ij} = \bar{\alpha}_i$ for both partners simultaneously. By Proposition~\ref{si:result2}, this occurs only on the measure-zero locus $a = c$ or $b = d$. For generic payoffs, therefore, the only asymptotically stable rest points are corners, and they realize exactly the four classes above.
\end{proof}

\vspace{12pt}
\begin{proposition}[Phase-boundary independence]
\label{si:result2} 
For any $2 \times 2$ game, the boundaries separating the four topological classes are $a = c$ and $b = d$: they are determined solely by the ordinal ranking of payoffs within each row, independent of the population composition $x$.
\end{proposition}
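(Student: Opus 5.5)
The plan has two parts. First, I show that the topological class at a generic rest point depends only on the two within-row comparisons, $a$ versus $c$ and $b$ versus $d$. Second, I check that nothing about $x$ enters that comparison. Here I read the payoffs the way Eq.~\ref{eq:physics} does: $a=U_{XX}$ and $c=U_{XY}$ form row X, while $b=U_{YX}$ and $d=U_{YY}$ form row Y.

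By the GGE condition (Eq.~\ref{eq:game-graph-equilibrium}) and the rest-point characterization used in the proof of Proposition~\ref{si:result1}, the link profile of type X at an asymptotically stable rest point is the corner on its strictly better partner. That is, $q^*_{XX}=1$ if $a>c$ and $q^*_{XY}=1$ if $a<c$. Likewise type Y has $q^*_{YX}=1$ if $b>d$ and $q^*_{YY}=1$ if $b<d$. These comparisons are unchanged by the additive shift $w$, as noted in the closed-form derivation. The four sign combinations then give Pair Same, Pair Different, Pair X and Pair Y, so the class changes exactly when one of the two signs flips. The class is therefore constant on each open quadrant cut out by the lines $a=c$ and $b=d$, and it differs across each line.

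Next I would show that the class cannot change anywhere else. On each line, type X (respectively Y) is indifferent between its partners. Every $q_{X\cdot}$ then satisfies the rest-point condition, so the corner loses strict asymptotic stability. This is the mixed locus that Proposition~\ref{si:result1} invokes, and it confirms the two lines are genuine boundaries rather than removable sets.

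For independence from $x$, I would note that the rest-point condition compares $\alpha_{ij}$ across $j$ within a single row, and no frequency appears in it. As a consistency check with the explicit dynamic, I would use Eq.~\ref{eq:physics}. The ratio $M_{XX}/M_{XY}=(x/y)(a/c)^{\ell}$ tends to $\infty$ or $0$ as $\ell\to\infty$ according to the sign of $a-c$, for every fixed $x\in(0,1)$. The factor $x/y$ only shifts the crossover at finite $\ell$, and that shift vanishes in the limit. The analogous statement holds for $M_{YX}/M_{YY}$ with $b$ versus $d$.

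The main obstacle is stating independence from $x$ precisely. At finite $\ell$ the crossover where $M_{XX}=M_{XY}$ sits at $(a/c)^{\ell}=y/x$, which does depend on $x$. The proposition must therefore be read as a claim about the equilibrium (or $\ell\to\infty$) classification. I would make that explicit, and I would treat the boundary points $x\in\{0,1\}$, where one type is absent, as excluded.
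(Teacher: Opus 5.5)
Your proposal is correct and is essentially the paper's argument: the class is fixed by the within-row affinity comparisons $\alpha_{XX}$ vs.\ $\alpha_{XY}$ and $\alpha_{YX}$ vs.\ $\alpha_{YY}$. These reduce to $a$ vs.\ $c$ and $b$ vs.\ $d$ once $w$ cancels, and $x$ never enters them. Your finite-$\ell$ check via $M_{XX}/M_{XY}=(x/y)(a/c)^{\ell}$ and your caveat that the $x$-independence concerns the equilibrium classification rather than finite-$\ell$ crossovers are sound additions the paper leaves implicit. One citation fix: for the claim that off the lines the asymptotically stable rest point is the strict-best corner, cite Proposition~\ref{si:prop-kgt2-hzero} rather than the proof of Proposition~\ref{si:result1}, because that proof itself appeals to the present proposition to exclude mixed rest points.
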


\begin{proof} With $\alpha_{ij} = U_{ij} + w$, the affinities do not depend on the population composition. The boundary $\alpha_{XX} = \alpha_{XY}$ is therefore $(a + w) = (c + w)$, i.e.\ $a = c$; the additive shift $w$ cancels. An identical calculation for $\alpha_{YX} = \alpha_{YY}$ yields $b = d$. Since $\alpha_{ij}$ is independent of $x$, so are the boundaries.
\end{proof}

The topology of the equilibrium network is therefore determined solely by the ordinal ranking of payoffs within each row of the payoff matrix. In particular, the game-space/graph-space phase diagram (Figure~\ref{fig:panel1_conceptual}) is a property of payoff structure alone.

\paragraph{Order parameter.}
Each phase boundary is the sign change of a per-row preference ``magnetization'' $m_r=q_{rr}-q_{r\bar r}=2q_{rr}-1$, which jumps to $\operatorname{sgn}(U_{rr}-U_{r\bar r})$ as $\ell\to\infty$. The two diagnostics we use in our analyses are orthogonal projections of $\mathbf m=(m_X,m_Y)$: $P(\text{pair same})=\tfrac12+\tfrac12(x m_X+y m_Y)$ (assortativity axis, Figure~\ref{fig:panel2_physics}a) and affinity imbalance $M_{XY}-M_{YX}=\tfrac{x-y}{2}-\tfrac12(x m_X-y m_Y)$ (which-core axis, Figure~\ref{fig:affinity_imbalance}).

In balanced populations these reduce to the uniform magnetization $m_X+m_Y$ and the staggered magnetization $m_X-m_Y$, placing the four phases at the cardinal points of the (assortativity, which-core) plane---assortative and disassortative on the assortativity axis (which-core $=0$), $X$- and $Y$-absorptive on the which-core axis (assortativity $=0$)---so that neither order parameter alone resolves all four.



\subsection*{Extension to $K > 2$ games}

For $K>2$ the affinity ordering and payoff ordering may not coincide and the general equilibrium condition must be stated in terms of affinities Eq.~\ref{eq:affinity}: 
\begin{equation}
q_{ij} > 0 \quad \text{only if} \quad
\alpha_{ij} \geq \alpha_{ik} \quad \text{for all } k,
\label{eq:gg-equilibrium-general}
\end{equation}
This condition characterizes the asymptotically stable rest points of any monotone selection dynamic on link probabilities \citep{hofbauer1998evolutionary, samuelson1992evolutionary}.

\begin{proposition}[Dynamic stability of GGE for any $K$]
\label{si:prop-kgt2-hzero}
When the game is non-degenerate, a state is a GGE if and only if it is asymptotically stable under any monotone selection dynamic on link probabilities.
\end{proposition}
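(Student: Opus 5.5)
The plan is to exploit the fact that link probabilities of different types evolve independently. Each row $q_{i\cdot}$ lives on its own simplex $\Delta_K$, and the affinities $\alpha_{ij}$ are constants that do not depend on the state. The dynamic on the product of simplices therefore decouples into $K$ independent single-row dynamics. A product state is asymptotically stable if and only if each row component is asymptotically stable for its own row dynamic. Stability of the product follows directly; for the converse, perturb one row only and keep the others fixed. So it suffices to prove the statement for a single row $i$ with constant fitness vector $(\alpha_{i1},\dots,\alpha_{iK})$. Non-degeneracy I take to mean that each row has a unique maximizer $j^*(i)$, i.e.\ $\alpha_{ij^*} > \alpha_{ik}$ for all $k \neq j^*$. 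This matches the unique-best-partner hypothesis used for the $K^K$ count.

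Under that hypothesis, condition \eqref{eq:gg-equilibrium-general} forces $q_{ij}=0$ for every $j\neq j^*$. The GGE is then the unique corner $e_{j^*}$ in each row, and the GGEs are exactly the $K^K$ profiles of corner choices, one per row.

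For the direction ``GGE $\Rightarrow$ asymptotically stable'', I would use the Lyapunov function $V_i(q)=1-q_{ij^*}$. By the definition of a monotone selection dynamic, $q_{ij^*}$ is non-decreasing along every trajectory, which gives Lyapunov stability of $e_{j^*}$. Moreover, $q_{ij^*}$ increases strictly whenever $q_{ij^*}>0$ and $\alpha_{ij^*}>\bar\alpha_i$. Near $e_{j^*}$ the first condition holds, and the second holds for every $q\neq e_{j^*}$ because $\bar\alpha_i=\sum_k q_{ik}\alpha_{ik}<\alpha_{ij^*}$ unless all mass is on $j^*$. So $V_i$ is strictly decreasing off the rest point in a neighbourhood. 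Continuity of the dynamic and LaSalle's invariance principle (or its discrete-time analogue) then yield convergence to $e_{j^*}$, hence asymptotic stability.

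For the converse, ``asymptotically stable $\Rightarrow$ GGE'', take an asymptotically stable state $q^*$ and suppose that in some row $q^*_{ij}>0$ for some $j\neq j^*$. I would argue by contradiction using the same Lyapunov function. Any neighbourhood of $q^*$ contains points with $q_{ij^*}>q^*_{ij^*}$; on the boundary case $q^*_{ij^*}=0$, one can still find nearby points with small positive $q_{ij^*}$. Since $q_{ij^*}$ is non-decreasing along trajectories, solutions started at such points keep $q_{ij^*}$ bounded away from $q^*_{ij^*}$ and so cannot converge to $q^*$. This contradicts attractivity. Therefore $q^*_{ij}=0$ for all $j\neq j^*$, i.e.\ $q^*$ is the GGE corner.

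The main obstacle is the boundary case and the weak hypotheses on the dynamic. The monotonicity assumption only guarantees strict growth when $q_{ij^*}>0$. I must check two things. First, that at $q^*$ with $q^*_{ij^*}=0$ the perturbed points with $q_{ij^*}>0$ are admissible initial conditions. Second, that asymptotic stability is measured relative to the full product simplex, so these perturbations are allowed. A further point is where non-degeneracy enters. With ties, a face of maximizers consists of non-isolated rest points that are stable but not asymptotically stable, so the equivalence genuinely requires uniqueness of the row maximum. I would note this explicitly, consistent with the measure-zero locus in Proposition~\ref{si:result1}.
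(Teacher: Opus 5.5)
Your proof is correct and is essentially the paper's. For GGE $\Rightarrow$ asymptotic stability, the paper uses the same Lyapunov function summed over types, $V=\sum_i(1-q_{i,f(i)})$. For the converse it also relies on $q_{i,f(i)}$ never decreasing. Your uniform version of the converse is cleaner: points arbitrarily close to $q^*$ with larger $q_{ij^*}$ can never converge to $q^*$. The paper instead splits into the cases $q^*_{i,f(i)}>0$ and $q^*_{i,f(i)}=0$, and uses the rest-point property and the replicator-specific fact that zero coordinates stay fixed; your argument needs neither.

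One correction: the paper's definition of a monotone selection dynamic does not require row $i$'s motion to be independent of the other rows. Your decoupling and product-of-stabilities step is therefore unjustified as stated. It is also unnecessary: both of your row-wise arguments go through unchanged on the full product of simplices once you use the summed $V$.
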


\begin{proof}
Since $\alpha_{ij} = U_{ij} + w$ is constant and non-degeneracy gives each type $i$ a unique maximizer $f(i) = \operatorname*{argmax}_j U_{ij}$, the GGE is the corner $q_{i,f(i)} = 1$ for all $i$.

\textit{(GGE $\Rightarrow$ asymptotically stable).}
Let $V = \sum_i (1 - q_{i,f(i)})$; then $V \geq 0$, with equality only at the GGE corner. Whenever $q_{i,f(i)} < 1$, type $i$ places weight on a strictly worse partner, so $\alpha_{i,f(i)} = \max_k \alpha_{ik} > \bar\alpha_i$. Any monotone dynamic then increases $q_{i,f(i)}$, so that summand strictly decreases while no other summand increases. Thus $V$ strictly decreases in a punctured neighborhood of the GGE, so $V$ is a strict Lyapunov function and the GGE is asymptotically stable.

\textit{(asymptotically stable $\Rightarrow$ GGE).}
Let $q^*$ be a rest point other than the GGE corner, so $q^*_{i,f(i)} < 1$ for some type $i$. If $q^*_{i,f(i)} > 0$, then $\alpha_{i,f(i)} > \bar\alpha_i$ (weight sits on a strictly worse partner), so any monotone dynamic strictly increases $q_{i,f(i)}$ and $q^*$ is not a rest point, a contradiction. If instead $q^*_{i,f(i)} = 0$, then $q^*$ is a rest point (the multiplicative update fixes a zero coordinate), but it is not asymptotically stable: any perturbation to $q_{i,f(i)} = \varepsilon > 0$ gives $\alpha_{i,f(i)} > \bar\alpha_i$, so monotonicity drives $q_{i,f(i)}$ away from $0$. In either case, no non-GGE rest point is asymptotically stable.
\end{proof}

\paragraph{GGE Stability with Equal Payoffs.}
\noindent Proposition~\ref{si:prop-kgt2-hzero} assumes all payoffs are non-equal (non-degeneracy). Here, we generalize the GGE-stability results to the degenerate cases where some payoffs are equal. For each type $i$, call the highest-utility partners $i$'s best set $S_i = \operatorname*{argmax}_j U_{ij}$. At a GGE (Eq.~\ref{eq:gg-equilibrium-general}) each type puts weight only on $S_i$. Denote the set of such equilibria by $\Gamma$. Because GGEs are such that $\alpha_{ij}=\bar\alpha$, each point on $\Gamma$ is a rest point of every monotone dynamic. When every best set is a single partner type, $\Gamma$ is the lone corner of Prop.~\ref{si:prop-kgt2-hzero}; when some best set contains a tie, $\Gamma$ is a continuum of equilibria. While individual GGEs are no longer asymptotically stable in these cases, the next two propositions show that Lyapunov stability is retained and that the GGE set is always reached.

\vspace{6pt}
\begin{proposition}[Lyapunov stability of the GGE set]
\label{si:gge_degenerate_stable}
Under any monotone selection dynamic on link probabilities, the GGE set $\Gamma$ is Lyapunov stable.
\end{proposition}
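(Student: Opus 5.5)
The plan is to mirror the Lyapunov argument in the proof of Proposition~\ref{si:prop-kgt2-hzero}, replacing the single best partner $f(i)$ by the best set $S_i$. Define
\[
V(q) = \sum_i \Bigl(1 - \sum_{j \in S_i} q_{ij}\Bigr) = \sum_i \sum_{j \notin S_i} q_{ij}.
\]
Then $V \ge 0$ on the product of simplices, and $V = 0$ exactly on $\Gamma$. Moreover, $V$ is continuous and, up to a constant factor, equals the $\ell_1$ distance from $q$ to $\Gamma$. Indeed, the nearest point of $\Gamma$ is obtained by moving the off-best mass of each row onto $S_i$, so $\operatorname{dist}_1(q,\Gamma) \le 2V(q)$ and $V(q) \le \operatorname{dist}_1(q,\Gamma)$. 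Lyapunov stability of $\Gamma$ then follows once I show that $V$ is non-increasing along every trajectory. Given $\varepsilon$, I would take $\delta = \varepsilon/2$: starting within $\delta$ of $\Gamma$ gives $V \le \delta$ for all later times, hence distance at most $2\delta = \varepsilon$.

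The key step is monotonicity of $V$. For each $j \in S_i$ we have $\alpha_{ij} \ge \alpha_{ik}$ for all $k$, since $\alpha = U + w$ preserves the ordering. By the definition of a monotone selection dynamic, each such $q_{ij}$ is non-decreasing along every trajectory. Summing over $j \in S_i$ shows that $\sum_{j\in S_i} q_{ij}$ is non-decreasing, so every summand of $V$ is non-increasing, and therefore so is $V$. This holds in both discrete and continuous time, because the definition is stated along trajectories, not infinitesimally.

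The main subtlety is that the definition only guarantees non-decrease for a most-preferred partner, and several partners may be tied in $S_i$. Each of them individually satisfies the hypothesis ``$\alpha_{ij}\ge\alpha_{ik}$ for all $k$'', so the sum argument is legitimate. This is the point where I would check carefully that ties cause no issue; no strictness is needed here, which is why only Lyapunov rather than asymptotic stability is claimed. A second point to verify is the comparison between $V$ and the distance to $\Gamma$, including the fact that $\Gamma$ is compact and is exactly the zero set of $V$. Both are routine once $\Gamma$ is identified as the product over $i$ of the faces spanned by $S_i$.
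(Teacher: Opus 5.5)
Your proposal is correct and is essentially the paper's proof: you use the same function $V=\sum_i\bigl(1-\sum_{j\in S_i}q_{ij}\bigr)$, and you show it is non-increasing because, under the definition of a monotone dynamic, each tied best-set weight is individually non-decreasing. The one refinement is that you make the paper's compactness step (``small $V$ forces proximity to $\Gamma$'') quantitative via $V\le \mathrm{dist}_1(q,\Gamma)\le 2V$ (in fact $\mathrm{dist}_1 = 2V$ exactly), which gives an explicit $\delta(\varepsilon)$.
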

\begin{proof}
Take the Lyapunov function of Prop.~\ref{si:prop-kgt2-hzero} with the best set in place of the unique maximizer,
\[
V = \sum_i \Big( 1 - \sum_{j \in S_i} q_{ij} \Big),
\] 
so $V \geq 0$, with equality exactly on $\Gamma$. When every $S_i = \{f(i)\}$ this is the $V = \sum_i (1 - q_{i,f(i)})$ of Prop.~\ref{si:prop-kgt2-hzero}. Since $\alpha_{ij} = U_{ij} + w$ is constant, every best-set partner ($j \in S_i$) has $\alpha_{ij} = \max_k \alpha_{ik} \geq \bar\alpha_i$, so no monotone dynamic decreases $q_{ij}$, each best-set weight $\sum_{j \in S_i} q_{ij}$ is non-decreasing, and $V$ is non-increasing. Since the state space is compact, small $V$ forces proximity to $\Gamma$, so a trajectory on which $V$ never grows stays near $\Gamma$. Therefore, $\Gamma$ is Lyapunov stable.
\end{proof}
\noindent \textit{Remark on stable points.}
Each individual equilibrium point is itself Lyapunov stable, since starting near the point the outside weight is small, it converges to the best-set. Because each best-set link probability $q_{ij}$ ($j \in S_i$) is individually non-decreasing, it rises by at most that small amount, leaving the state near the equilibrium point.

\vspace{6pt}
\begin{proposition}[Global attraction to the GGE set]
\label{si:gge_degenerate_attract}
Under any monotone selection dynamic on link probabilities, every interior trajectory converges to $\Gamma$.
\end{proposition}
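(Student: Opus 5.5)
The plan is to reuse the best-set Lyapunov function from Proposition~\ref{si:gge_degenerate_stable},
\[
V(q)=\sum_i\Big(1-\sum_{j\in S_i}q_{ij}\Big),
\]
and run a LaSalle-type argument. We already know $V$ is non-increasing along every trajectory, since best-set weights never fall under a monotone dynamic. It is also bounded below by $0$. Hence along any interior trajectory $V(q(t))$ converges to some limit $V_\infty\ge 0$. The goal is to show $V_\infty=0$; by compactness of the product of simplices, this forces the distance from $q(t)$ to $\Gamma$ to go to zero.

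\textbf{Step 1: the dynamic preserves the interior.} For the replicator (Eq.~\ref{eq:graph-dynamics}) with $w>0$ this is immediate, because the update is multiplicative with strictly positive factor $\alpha_{ij}/\bar\alpha_i$. For a general monotone dynamic, I would only need the weaker fact that best-set weights stay positive. This follows because they start positive and are non-decreasing.

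\textbf{Step 2: identify the limit points.} Let $\omega$ be the $\omega$-limit set of the trajectory. It is nonempty and compact, and by continuity $V\equiv V_\infty$ on $\omega$. For discrete-time dynamics the update map is continuous, and for continuous time the flow is; either way $\omega$ is invariant. Now suppose $V_\infty>0$ and pick $q^*\in\omega$. Some type $i$ has outside weight $1-\sum_{j\in S_i}q^*_{ij}>0$. Since the $\alpha_{ik}$ are constant and every $k\notin S_i$ has $\alpha_{ik}<\max_k\alpha_{ik}$, we get $\bar\alpha_i(q^*)<\alpha_{ij}$ for every $j\in S_i$.

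Monotonicity gives strict growth of $q_{ij}$ at $q^*$ for any $j\in S_i$ with $q^*_{ij}>0$. At least one such $j$ exists: along the trajectory $\sum_{j\in S_i}q_{ij}(t)$ is non-decreasing and starts positive, so its limit is positive. Strict growth of a best-set weight at $q^*$ means $V$ strictly decreases at $q^*$. This contradicts $V$ being constant on the invariant set $\omega$. Therefore $V_\infty=0$ and the trajectory converges to $\Gamma$.

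\textbf{Main obstacle.} The delicate point is Step 2's use of ``strict increase'' at limit points. The monotone-dynamic definition only guarantees strict increase at states, not any uniform rate. The invariance-of-$\omega$ argument needs the dynamic to be a continuous map or flow, so that the strict decrease of $V$ at $q^*$ contradicts the constancy of $V$ on $\omega$. I would state continuity explicitly, which the definition (``any continuous dynamic'') provides. I would also be careful that the positivity of the best-set mass at $q^*$ comes from monotonicity plus the interior start, not from $q^*$ being interior.

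A final remark I would add is that convergence to the \emph{set} $\Gamma$ is all that is claimed. When best sets contain ties, the trajectory may approach $\Gamma$ without converging to a single point, and nothing more is needed.
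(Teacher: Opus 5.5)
Your proposal is correct and is essentially the paper's argument: a LaSalle-type invariance argument in which best-set weights are non-decreasing and start positive, so they stay positive at every $\omega$-limit point. Whenever outside weight is positive, $\bar\alpha_i<\max_k\alpha_{ik}$, and the resulting strict decrease at a point of the forward-invariant $\omega$-limit set contradicts the constancy of the Lyapunov quantity there. The only difference is bookkeeping: the paper runs this type by type on each summand $W_i=1-\sum_{j\in S_i}q_{ij}$ and then sums to get $V\to 0$, whereas you run it once on the aggregate $V$, which works equally well because no other summand can increase.
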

\begin{proof}
Fix a type $i$, and let $W_i = \sum_{j \notin S_i} q_{ij} = 1 - \sum_{j \in S_i} q_{ij}$ be its outside weight, the $i$-th summand of $V$. Because the trajectory is interior, each best-set link probability starts positive and is non-decreasing (see proof of Prop.~\ref{si:gge_degenerate_stable}). Hence $q_{ij}$ for $j \in S_i$ is bounded below by its initial value and stays positive along the trajectory, and by continuity, at every $\omega$-limit point. At any such state where $W_i > 0$, $\bar\alpha < \max_k \alpha_{ik}$, thus $q_{ij}$ strictly increases for $j \in S_i$, and $W_i$ strictly decreases. Since $W_i$ is non-increasing and bounded below, its values along the trajectory converge to a limit. And since the state space is compact, the trajectory has a non-empty forward-invariant $\omega$-limit set. As $W_i$ is a continuous function of the state, every $\omega$-limit point carries the same limiting value and $W_i$ equals the limit throughout the $\omega$-limit set. If the limit were positive, every point of that set would have $W_i>0$ and so strictly decrease $W_i$, which is impossible on a forward invariant set where $W_i$ is constant \citep{lasalle1976stability}. Hence, the limit is $0$ and $W_i \to 0$. As this holds for every type, $V \to 0$ and by the compactness argument of Prop.~\ref{si:gge_degenerate_stable}, the trajectory converges to $\Gamma$.
\end{proof}

\paragraph{Counting equilibrium topologies.}
The four-topology classification for $2\times 2$ games (Proposition~\ref{si:result1}) raises a natural question: how many distinct non-degenerate game-graph equilibrium topologies exist for general $K$-type games? A $K$-type game is \textit{non-degenerate} if every row of the payoff matrix has a unique maximum: for each $i \in [K]$, there is a unique $j^*(i)$ such that $U_{ij^*(i)} > U_{ij}$ for all $j \neq j^*(i)$. Non-degeneracy holds for Lebesgue-almost-all payoff matrices. When non-degeneracy holds, the GGE condition Eq.~\ref{eq:gg-equilibrium-general} requires each type $i$ to direct all interactions toward $j^*(i)$. The equilibrium topology is therefore completely encoded by the \textit{preference function}
\begin{equation}
    f : [K] \to [K], \qquad f(i) = \operatorname*{argmax}_{j \in [K]}\; U_{ij}.
    \label{eq:preference_function}
\end{equation}

\begin{proposition}[Labeled count]
\label{si:labeled_count}
There are exactly $K^K$ distinct non-degenerate game-graph equilibrium topologies over labeled $K$-type games, and every such topology is realizable.
\end{proposition}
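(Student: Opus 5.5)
The plan is to build an explicit bijection between non-degenerate game-graph equilibrium topologies on labeled types and the set of functions $f:[K]\to[K]$, which has cardinality $K^K$. The argument has three steps: well-definedness, injectivity, and surjectivity (realizability).

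\textbf{Well-definedness.} By the discussion preceding Eq.~\ref{eq:preference_function}, non-degeneracy gives each type $i$ a unique maximizer $j^*(i)$. The GGE condition Eq.~\ref{eq:gg-equilibrium-general} then forces $q_{ij}=0$ for all $j\neq j^*(i)$, hence $q_{i,j^*(i)}=1$. The affinity ordering coincides with the payoff ordering because $w$ is a uniform shift. So the GGE is the unique corner $q_{ij}=\mathbf 1[j=f(i)]$, as already noted in the proof of Proposition~\ref{si:prop-kgt2-hzero}. I would define the topology of a non-degenerate GGE as its link-probability profile, equivalently the support pattern of $\mathbf M$ with $M_{ij}=x_i q_{ij}$. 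This gives a map from non-degenerate GGE topologies to preference functions, and the map is well defined.

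\textbf{Injectivity.} If $f\neq g$, pick $i$ with $f(i)\neq g(i)$. The row-$i$ supports of the two corner profiles differ, so the topologies differ. Because types are labeled, there is no quotient by permutations of $[K]$. This is the point where the labeled count differs from the exchangeable SBM count, and I would remark that the diagonal anchor supplies exactly this labeling.

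\textbf{Surjectivity.} Given an arbitrary $f:[K]\to[K]$, set $U_{ij}=1$ if $j=f(i)$ and $U_{ij}=0$ otherwise. Positive payoffs also work, for example $U_{ij}=2$ versus $1$, if one wants $\alpha_{ij}>0$ without relying on $w$. Each row then has a unique maximum at $f(i)$, so the game is non-degenerate and its GGE is the corner encoded by $f$. By Proposition~\ref{si:prop-kgt2-hzero}, this corner is also the asymptotically stable rest point of any monotone selection dynamic, so it is realized dynamically and not merely as a formal solution.

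\textbf{Main obstacle.} The only delicate step is fixing what ``distinct topology'' means, so that the count is neither inflated nor collapsed. Two issues arise.

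The first is whether dependence on $x$ can make two functions look the same. It cannot: support patterns depend only on $f$, provided $x_i>0$ for all $i$. By Proposition~\ref{si:result2} (and its $K$-type analogue), the boundaries are independent of $x$.

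The second is whether self-loops $f(i)=i$ and cycles should be counted as distinct. They should, since the statement counts labeled configurations.

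I would close by checking consistency with $K=2$: the $2^2=4$ functions correspond to Pair Same ($f=\mathrm{id}$), Pair Different (the swap), Pair X (constant $X$), and Pair Y (constant $Y$). This recovers Proposition~\ref{si:result1}.
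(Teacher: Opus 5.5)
Your proposal is correct and follows the paper's proof. Non-degeneracy forces the GGE to be the corner $q_{ij}=\mathbf{1}[j=f(i)]$, topologies are counted by preference functions $f:[K]\to[K]$, and realizability uses the same indicator payoff $U_{ij}=\mathbf{1}[j=f(i)]$. Beyond that, you make explicit the injectivity step that the paper leaves implicit (distinct $f$ give distinct support patterns once $x_i>0$), and you add the dynamic-realization remark via Proposition~\ref{si:prop-kgt2-hzero}.
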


\begin{proof}
In the non-degenerate case, the link probabilities $q_{ij}$ are fully determined by the $K$ values $f(1), \ldots, f(K)$: each type $i$ sends all links to $f(i)$, so $q_{i,f(i)} = 1$ and $q_{ij} = 0$ for $j \neq f(i)$. Since $f$ assigns each of the $K$ types an image in $[K]$, the total number of preference functions $f: [K] \to [K]$ is $K^K$.

For realizability, fix any $f: [K] \to [K]$ and set $U_{ij} = \mathbf{1}[j = f(i)]$. Then $f(i) = \operatorname*{argmax}_j U_{ij}$ for every $i$, so this payoff matrix is non-degenerate and realizes $f$. Since $f$ was arbitrary, every element of $[K]^{[K]}$ is realizable.
\end{proof}

\noindent The $K^K$ formula is the direct generalization of the $2^2 = 4$ four-topology result: each of the $K$ types independently selects a preferred partner from $K$ options. In empirical applications, type labels carry meaning (cooperator vs.\ defector, core vs.\ periphery), so the labeled count $K^K$ is the relevant quantity.

\paragraph{Strategic equivalence and manifold collapse.}
Proposition~\ref{si:labeled_count} counted the topologies generated by strategically distinct types; we now treat the degenerate case and show that a strategically equivalent (redundant) type is a null direction of the equilibrium map, so the partner-choice manifold carries no coordinate for it and the effective number of communities is fixed by the payoffs rather than chosen.

\vspace{6pt}
\begin{definition}[Strategic equivalence]
\label{def:strategic_equiv}
Two types $r, s \in [K]$ are \textit{strategically equivalent}, written $r \sim s$, if they are interchangeable both as \textit{partners} (every type earns the same payoff from $r$ as from $s$) and as \textit{choosers} (they rank their own partners the same way):
\begin{equation}
\underbrace{U_{jr} = U_{js}\ \text{for all } j}_{\text{equal payoff as a partner}}, \qquad
\underbrace{U_{rj} = \lambda\,U_{sj}\ \text{for all } j,\ \text{some } \lambda > 0}_{\text{proportional payoff as a chooser}}.
\label{eq:strategic_equiv}
\end{equation}
The two conditions are asymmetric because of how each is normalized. As a \textit{chooser}, type $r$'s outgoing links follow $q_{rj} \propto x_j U_{rj}^{\ell}/Z_r$; the own-normalizer $Z_r$ absorbs any overall scaling of $r$'s payoffs, so only their ratios matter and proportionality suffices (the model's magnitude gauge). As a \textit{partner}, $r$'s incoming links from any chooser $j$ follow $q_{jr}/x_r = U_{jr}^{\ell}/Z_j$, sharing $j$'s normalizer $Z_j$; with no scaling to absorb, $r$ and $s$ must yield exactly equal payoffs to each partner.
\end{definition}

\begin{lemma}[Identical mixing] \label{si:mixing_lemma}
Under the partner-choice map (Eq.~\ref{eq:physics}), $q_{ij}^{(\ell)} = x_j U_{ij}^{\ell}/Z_i$
with $Z_i = \sum_k x_k U_{ik}^{\ell}$, for any types $r,s$ and any $\ell \geq 1$: 

\begin{enumerate}
\item[\emph{(i)}] $r$ and $s$ have identical \emph{outgoing} distributions,
$q_{rj}^{(\ell)} = q_{sj}^{(\ell)}$ for all $j$, iff their chooser payoffs are
proportional, $U_{rj} = \lambda\,U_{sj}$ for all $j$, some $\lambda > 0$.
\item[\emph{(ii)}] $r$ and $s$ receive identical per-capita \emph{incoming} links,
$q_{jr}/x_r = q_{js}/x_s$ for all $j$, iff their partner payoffs are equal,
$U_{jr} = U_{js}$ for all $j$.
\end{enumerate}
\end{lemma}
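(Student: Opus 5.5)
The plan is to work directly from the closed form $q_{ij}^{(\ell)} = x_j U_{ij}^{\ell}/Z_i$ derived earlier (Eq.~\ref{eq:mixing_general}), reducing each equivalence to an identity between ratios of powered payoffs. I will use two standing facts. First, the payoffs entering the map are strictly positive: these are the affinities $U_{ij}+w$ with $w$ chosen to ensure positivity, and $w$ is set to zero only after the shift. Second, every block frequency satisfies $x_j>0$. Together these guarantee that each $Z_i>0$, that no denominator vanishes, and that $t\mapsto t^{\ell}$ is a strictly increasing bijection of $(0,\infty)$ for every $\ell\ge 1$ (indeed for any real $\ell>0$).

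For part (i), the ``if'' direction is a direct substitution. Suppose $U_{rj}=\lambda U_{sj}$ for all $j$. Then $Z_r=\sum_k x_k\lambda^{\ell}U_{sk}^{\ell}=\lambda^{\ell}Z_s$, so
\[
q_{rj}^{(\ell)}=\frac{x_j\lambda^{\ell}U_{sj}^{\ell}}{\lambda^{\ell}Z_s}=q_{sj}^{(\ell)}.
\]
For ``only if'', I will equate $x_jU_{rj}^{\ell}/Z_r = x_jU_{sj}^{\ell}/Z_s$ for each $j$ and cancel $x_j>0$. This gives $(U_{rj}/U_{sj})^{\ell}=Z_r/Z_s$, where the right-hand side is a positive constant that does not depend on $j$. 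Taking the positive $\ell$-th root yields $U_{rj}=\lambda U_{sj}$ with the single constant $\lambda=(Z_r/Z_s)^{1/\ell}>0$.

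For part (ii), I will first compute the per-capita incoming weight. From the closed form, $q_{jr}/x_r = U_{jr}^{\ell}/Z_j$, and the factor $x_r$ cancels. The two quantities $q_{jr}/x_r$ and $q_{js}/x_s$ share the common normalizer $Z_j$. Their equality is therefore equivalent to $U_{jr}^{\ell}=U_{js}^{\ell}$, which by injectivity of $t\mapsto t^{\ell}$ on positive reals is equivalent to $U_{jr}=U_{js}$. This holds for each $j$ separately, in both directions at once.

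There is no real obstacle; the only delicate step is the root extraction in (i). It needs the ratio $U_{rj}/U_{sj}$ to be well defined and positive, and the constant $Z_r/Z_s$ to be independent of $j$. Both follow from positivity of the payoffs and of $x$, so I will state those hypotheses explicitly at the outset, noting that they are supplied by the affinity shift $w$ and by the requirement that all blocks be nonempty. I will close by remarking that this contrast, in which the private normalizer $Z_r$ absorbs scale while the shared normalizer $Z_j$ does not, is exactly the asymmetry built into Definition~\ref{def:strategic_equiv}.
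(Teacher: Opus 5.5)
Your proposal is correct and follows the paper's proof essentially step for step. In (i) you cancel $x_j$ and extract $\lambda=(Z_r/Z_s)^{1/\ell}$, and in (ii) you use the shared normalizer $Z_j$. You are also slightly more careful than the paper: you check the ``if'' direction of (i) explicitly via $Z_r=\lambda^{\ell}Z_s$, and you state the positivity of $x_j$ that the cancellation requires.
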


\begin{proof}
\emph{(i)} $q_{rj}^{(\ell)} = q_{sj}^{(\ell)}$ for all $j$ iff $U_{rj}^{\,\ell} = (Z_r/Z_s)\,U_{sj}^{\ell}$ for all $j$, i.e.\ $U_{rj} = \lambda\,U_{sj}$ with $\lambda = (Z_r/Z_s)^{1/\ell} > 0$ (payoffs positive). \emph{(ii)} $q_{jr}/x_r = U_{jr}^{\,\ell}/Z_j$ and $q_{js}/x_s = U_{js}^{\,\ell}/Z_j$ share the denominator $Z_j$, so they are equal for all $j$ iff $U_{jr} = U_{js}$ for all $j$.
\end{proof}

Let $(\mathbf U,\mathbf x,\ell)$ be a game with $r\sim s$ (Definition~\ref{def:strategic_equiv}). Proportional chooser rows induce identical outgoing distributions (Lemma~\ref{si:mixing_lemma}(i)), so for the map $\mathbf M$ we may take $\lambda=1$ without loss of generality; then $U_{ri}=U_{si}$ for all $i$, the $\{r,s\}$ block of $\mathbf U$ is constant, and the reduced game $(\mathbf U^\pi,\dots)$ is well defined.

\vspace{6pt}
\begin{proposition}[Manifold collapse: no degenerate equilibrium] \label{si:manifold_collapse}

Let $(\mathbf U,\mathbf x,\ell)$ be a game with $r\sim s$ (Definition \ref{def:strategic_equiv}). Since proportional chooser rows induce identical outgoing distributions (Lemma~\ref{si:mixing_lemma}(i)), we may normalize $\lambda=1$ without loss of generality, so $U_{ir}=U_{is}$ and $U_{ri}=U_{si}$ for all $i$ and the $\{r,s\}$ block of $\mathbf U$ is constant.
Let $\pi$ merge $r,s$ into a single class $c$ with $x_c=x_r+x_s$, inducing the reduced game $(\mathbf U^\pi,\mathbf x^\pi,\ell)$ with $U^\pi_{ci}=U_{ri}$, $U^\pi_{ic}=U_{ir}$, $U^\pi_{cc}=U_{rr}$. Then the partner-choice map factors through $\pi$:
\[
\pi_\star\,\mathbf M(\mathbf U,\mathbf x,\ell)\;=\;\mathbf M(\mathbf U^\pi,\mathbf x^\pi,\ell),
\]
where $\pi_\star$ sums the $r$ and $s$ rows and columns of $\mathbf M$. Moreover the split ratio $x_r{:}x_s$ is a null direction of the map: reallocating class-$c$ mass between $r$ and $s$ at fixed $x_c$ leaves $\pi_\star\mathbf M$ unchanged.
\end{proposition}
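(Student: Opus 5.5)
The plan is a direct computation with the closed form of Eq.~\ref{eq:mixing_general}, $M_{ij}=x_i x_j U_{ij}^{\ell}/Z_i$ with $Z_i=\sum_k x_k U_{ik}^{\ell}$. After the normalization $\lambda=1$ (justified by Lemma~\ref{si:mixing_lemma}(i), since rescaling a chooser row leaves $q_{r\cdot}$ and hence $M_{r\cdot}$ unchanged), the columns $r,s$ of $\mathbf U$ coincide and so do the rows $r,s$.

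The first step is to show the normalizers are preserved. For any chooser $i$, group the $r$ and $s$ terms of $Z_i$:
\[
x_rU_{ir}^{\ell}+x_sU_{is}^{\ell}=(x_r+x_s)U_{ir}^{\ell}=x_cU^\pi_{ic}{}^{\ell}.
\]
Hence $Z_i=Z^\pi_i$ for every $i\notin\{r,s\}$. Because the $r$ and $s$ rows of $\mathbf U$ are identical, $Z_r=Z_s=Z^\pi_c$. Together with $U_{rr}=U_{rs}=U_{sr}=U_{ss}=U^\pi_{cc}$ (the constant $\{r,s\}$ block), the same grouping gives $Z_r=Z^\pi_c$.

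The second step checks $\pi_\star\mathbf M=\mathbf M^\pi$ entry by entry, in four cases.
\begin{itemize}
\item[(a)] For $i,j\notin\{r,s\}$ the entry is untouched, since $Z_i=Z^\pi_i$.
\item[(b)] For the column sum with $i\notin\{r,s\}$: $M_{ir}+M_{is}=x_iU_{ir}^{\ell}(x_r+x_s)/Z_i=M^\pi_{ic}$.
\item[(c)] For the row sum with $j\notin\{r,s\}$: $M_{rj}+M_{sj}=(x_r+x_s)x_jU_{rj}^{\ell}/Z^\pi_c=M^\pi_{cj}$, using $U_{sj}=U_{rj}$ and $Z_s=Z_r$.
\item[(d)] For the $2\times2$ block: $\sum_{u,v\in\{r,s\}}M_{uv}=U_{rr}^{\ell}\sum_u x_u\sum_v x_v/Z^\pi_c=x_c^2U^\pi_{cc}{}^{\ell}/Z^\pi_c=M^\pi_{cc}$.
\end{itemize}

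The null-direction claim then follows with almost no extra work. In cases (a)–(d), $x_r$ and $x_s$ enter only through $x_r+x_s=x_c$, either explicitly or inside $Z$. Reallocating mass $x_r\to x_r+t$, $x_s\to x_s-t$ at fixed $x_c$ therefore leaves every entry of $\pi_\star\mathbf M$ unchanged. Differentiating in $t$ gives zero, so the split ratio is a null direction of the map.

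There is no real analytic obstacle here. The only delicate points are bookkeeping. The first is justifying the $\lambda=1$ normalization without altering $\mathbf M$: the rescaled row changes $Z_r$ but not $q_{r\cdot}$. I would stress that the column condition $U_{jr}=U_{js}$ is genuinely needed and not normalized away, since it is what makes the $r$ and $s$ terms of every $Z_i$ merge. The second is the diagonal block (d), which uses all four equalities of the constant $\{r,s\}$ block. One must check that $U_{rs}=U_{rr}$ indeed follows, which it does by combining the column condition at $j=r$ with the row condition at $j=s$.
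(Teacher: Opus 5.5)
Your proof is correct and follows essentially the paper's argument: group the $r,s$ terms in every normalizer so that $Z_i=Z_i^\pi$ and $Z_r=Z_s=Z_c^\pi$, check the merged row, column and diagonal-block sums, and observe that $x_r,x_s$ enter only through $x_c$. The only cosmetic difference is that you verify the identities on $M_{ij}$ directly, whereas the paper verifies them on $q_{ij}$ and then sums $M_{ij}=x_iq_{ij}$.
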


\begin{proof}
The outgoing edge probability is $q_{ij}=x_jU_{ij}^{\ell}\big/\sum_k x_kU_{ik}^{\ell}$. For a chooser $i\notin\{r,s\}$, $U_{ir}=U_{is}=:U_{ic}$ gives $q_{ir}+q_{is}=(x_r+x_s)U_{ic}^{\ell}/Z_i$ with $Z_i=\sum_{k\notin\{r,s\}}x_kU_{ik}^{\ell}+x_cU_{ic}^{\ell}=Z_i^\pi$, and $q_{ij}=q^\pi_{ij}$ for $j\notin\{r,s\}$. 
The choosers $r,s$ have equal rows, hence equal denominators $Z_r = Z_s = Z_c^\pi$, making $q_{rj}=q_{sj}$ for $j\notin\{r,s\}$. For choices within the merged block, both $r$ and $s$ allocate the same combined probability: $q_{rr}+q_{rs} = q_{sr} + q_{ss} = x_c U_{cc}^\ell / Z_r = q_{cc}^\pi$. 
Summing $M_{ij}=x_iq_{ij}$ over the merged indices yields $\pi_\star\mathbf M=\mathbf M^\pi$. Every expression depends on $r,s$ only through $x_r+x_s$, so replacing $(x_r,x_s)$ by any $(x_r',x_s')$ with $x_r'+x_s'=x_c$ leaves $\pi_\star\mathbf M$ fixed; the family $\{x_r'+x_s'=x_c\}$ is a fiber of the map.
\end{proof}

\noindent\textit{Remark (dynamic-independence).}
The collapse is a property of the equilibrium, not of the replicator parameterization used above. By the partner condition $U_{ir}=U_{is}$, equivalent types present equal affinities to every chooser, so under any imitative (growth-rate-monotone) selection dynamic on link probabilities each chooser's split ratio $q_{ir}{:}q_{is}$ is a constant of motion: no such dynamic resolves the partition, and the rest point lies on the null fiber regardless of the dynamic. The closed form $q_{ij}=x_jU_{ij}^{\ell}/Z_i$ is the state of one such dynamic—$\ell$ multiplicative updates of the replicator from random mixing.

\noindent\textit{Remark (self-regularization and its sample image).}
Proposition~\ref{si:manifold_collapse} is parameter-free and holds before any model is fit: the $K$-type equilibrium with a redundant type pair realizes exactly the $(K{-}1)$-type topology. The population-level statement transfers to the likelihood through node exchangeability: the per-dyad edge probability from a type-$i$ node to any individual type-$r$ node is $q_{ir}/n_r \propto U_{ic}^{\ell}/Z_i$, which by the partner condition is identical for type-$s$ nodes and independent of the split $x_r{:}x_s$. Nodes of the two equivalent types are therefore exchangeable, and the observed-data likelihood is exactly constant along the null fiber: the split is unidentifiable. This flat direction is the exact source of the rank-deficient Fisher information observed in finite-sample fits---the empty hard block and singular curvature are the sample image of the null fiber. A fit with $K$ blocks can still represent the redundant pair as separate blocks, but only with an arbitrary split and duplicated block parameters at the same likelihood as the reduced $(K{-}1)$-block equilibrium; the extra block carries no information, so no description-length or BIC penalty is needed to \textit{justify} the smaller $K$. Which point of the fiber an optimizer reaches is then decided by initialization and finite-sample tie-breaking rather than by the population geometry; the empirical results (Results, ``Self-Regularization: Automatic Block Selection''; Figure~\ref{fig:panel3_results}b) show that EM restarts land in the collapsed, empty-block basin in practice. The exact non-identifiability is what makes this self-regularization possible, not a numerical accident.

\textit{Take-away.} A strategically redundant type never constitutes a sustained distinct community and mechanistic inference provides self-regularization. A redundant type is observationally null---a flat fiber of the equilibrium map that the inference collapses exactly. Redundant types collapse onto the strategic-equivalence quotient.

\vspace{6pt}
\noindent Throughout, $\bU>0$ entrywise and $x_i\ge x_{\min}>0$ for all $i$. We relax the non-degeneracy of Proposition~\ref{si:prop-kgt2-hzero} to its \textit{quotient} form: every row of the reduced game $(\bU^\pi,\mathbf x^\pi,\ell)$ on the classes of $\sim$ has a unique maximizer, written $[f(i)]$. Ties \textit{within} equivalence classes are permitted (they are forced by $\sim$), so this assumption, Proposition~\ref{si:manifold_collapse}, and the GGE-set results (Propositions~\ref{si:gge_degenerate_stable}--\ref{si:gge_degenerate_attract}) all apply to the same game.
\begin{proposition}[Block persistence]
\label{si:persistence}
Let $\mathcal{G} = (\bU, K, \mathbf{x}, \ell)$ satisfy quotient non-degeneracy and let types $r, s$ be \textit{not} strategically equivalent ($r \not\sim s$). Then:
\begin{enumerate}
\item[\emph{(i)}] \textbf{Distinct structure.} If $r$ and $s$ have non-proportional chooser payoffs, then $q_{rj}^{(\ell)} \neq q_{sj}^{(\ell)}$ for some $j$ at every finite $\ell \ge 1$; if their partner payoffs differ ($U_{jr}\neq U_{js}$ for some $j$), the per-capita incoming rates from $j$ differ. Both conditions are also necessary (Lemma~\ref{si:mixing_lemma} and the converse in the proof), and the index $j$ may lie in $\{r,s\}$.
\item[\emph{(ii)}] \textbf{Selection differential.} If $[f(r)] \neq [f(s)]$, types $r, s$ converge to GGE points supported on disjoint classes as $\ell \to \infty$, so their separation is bounded below for all large $\ell$; if they share a best class but differ in their chooser-payoff ratios, their outgoing distributions stay distinct at every finite $\ell$, though the separation decays geometrically as both concentrate on the shared class; and if $U_{jr} \neq U_{js}$, type $j$ has a strict affinity differential $\alpha_{jr} \neq \alpha_{js}$.
\item[\emph{(iii)}] \textbf{Likelihood separation.} Because $r \not\sim s$, merging them misspecifies at least one block-pair edge probability (by~(i), an outgoing $B_{rj} \neq B_{sj}$ or an incoming $B_{jr} \neq B_{js}$). The resulting per-pair KL divergence is strictly positive, and the total information loss scales as $O(N^2)$ (dense regime; $\Theta(N)$ when $B = \Theta(1/N)$) against an $O(\log N)$ BIC penalty for the $2K - 1$ extra block parameters, so the $K$-block model is preferred to the merge, with probability tending to one, above a game-dependent network size $N^\ast(\mathcal G)$.
\end{enumerate}
\end{proposition}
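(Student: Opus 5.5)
The plan is to treat the three parts in order, since each builds on the previous one. Lemma~\ref{si:mixing_lemma} carries most of part (i), the limiting form of the closed-form solution (Eq.~\ref{eq:mixing_general}) carries part (ii), and a standard KL/BIC comparison carries part (iii).

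For (i), I would use the contrapositive of Lemma~\ref{si:mixing_lemma}. Suppose $r\not\sim s$. Then at least one of the two conditions in Definition~\ref{def:strategic_equiv} fails. If the chooser rows are not proportional, Lemma~\ref{si:mixing_lemma}(i) gives $q^{(\ell)}_{rj}\neq q^{(\ell)}_{sj}$ for some $j$. The only point to verify is that this holds at every finite $\ell\ge1$. This follows because $U\mapsto U^{\ell}$ is injective on positive reals, so proportionality of $U^{\ell}_{r\cdot}$ and $U^{\ell}_{s\cdot}$ is equivalent to proportionality of $U_{r\cdot}$ and $U_{s\cdot}$. If instead the partner payoffs differ at some $j$, Lemma~\ref{si:mixing_lemma}(ii) gives unequal per-capita incoming rates from $j$, because the two rates share the denominator $Z_j$. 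Necessity is exactly the ``iff'' direction of the lemma. The index $j$ may lie in $\{r,s\}$, since nothing in the lemma excludes it.

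For (ii), I would write $q^{(\ell)}_{ij}=x_j (U_{ij}/U_{i,\max})^{\ell}\big/\sum_k x_k (U_{ik}/U_{i,\max})^{\ell}$. As $\ell\to\infty$, this concentrates on the best class $[f(i)]$, which is unique by quotient non-degeneracy. The limit is proportional to $x_j$ within that class.
\begin{itemize}
\item If $[f(r)]\neq[f(s)]$, the limiting supports are disjoint classes, so the total-variation distance tends to $1$ and is bounded below, say by $1/2$, for $\ell$ large.
\item If $[f(r)]=[f(s)]$ but the ratios differ, finite-$\ell$ distinctness comes from (i). Both rows approach the same limit, with off-class mass of order $\rho^{\ell}$, where $\rho<1$ is the largest ratio of a second-class payoff to the best payoff. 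This gives the geometric decay.
\item The affinity claim is immediate from $\alpha_{jr}-\alpha_{js}=U_{jr}-U_{js}$.
\end{itemize}

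For (iii), the merged model forces a common edge probability $\bar B$ on the rows (or columns) that (i) showed are distinct. I would compute the expected log-likelihood loss as $\sum_{\text{dyads}}\mathrm{KL}(\mathrm{Bern}(B_{rj})\,\|\,\mathrm{Bern}(\bar B))$ plus the analogous term for $s$. This is strictly positive because at least one pair differs, and there are $n_r n_j=\Theta(N^2)$ such dyads since $x_i\ge x_{\min}$. In the sparse regime $B=\Theta(1/N)$, the Bernoulli KL behaves like $\Theta(1/N)$ times a Poisson KL, giving $\Theta(N)$. The BIC penalty difference is $\tfrac{2K-1}{2}\log(\text{number of dyads})=O(\log N)$. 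The main obstacle is converting the expected gap into a with-probability-one statement, because the partition is estimated rather than known. My first approach would be to condition on the true partition and apply Bernstein/Chernoff concentration to the log-likelihood-ratio sum, whose mean is $\Theta(N)$ and variance $O(N)$. I would then argue that a maximum over alternative merged partitions cannot recover the gap, via a union bound, or by invoking consistency of the block estimate above the Kesten--Stigum threshold. I would accept a game-dependent constant $N^\ast(\mathcal G)$ that depends on the minimal KL gap and on $x_{\min}$.
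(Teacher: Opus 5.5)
Your proposal is correct and follows the paper's proof closely. Part (i) goes through the closed form and Lemma~\ref{si:mixing_lemma}; part (ii) goes through concentration on the unique best class with $O(\rho^{\ell})$ decay, where you take the $\ell\to\infty$ limit of the closed form directly and the paper instead cites the GGE convergence results on the quotient game, a cosmetic difference; and part (iii) goes through a strictly positive per-dyad KL divergence summed over $\Theta(N^2)$ dyads, with Bernstein concentration at the true partition.

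You do not need the extra step of maximizing over estimated partitions. The statement concerns only the specific $\{r,s\}$-merge, and the paper defers partition recovery to its detectability caveat. That fallback would also not work as proposed, because the Kesten--Stigum threshold guarantees only weak recovery, not consistency.
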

\begin{proof}
Work with general $K$. \emph{(i)} From the closed form $q^{(\ell)}_{ij}=x_jU_{ij}^{\ell}/Z_i$: if $q_{rj}^{(\ell)} = q_{sj}^{(\ell)}$ for all $j$, then $U_{rj}^{\ell}/Z_r = U_{sj}^{\ell}/Z_s$, so $U_{rj} = (Z_r/Z_s)^{1/\ell}\,U_{sj}$ for all $j$ ($\bU>0$), i.e.\ the chooser rows are proportional; the contrapositive gives the outgoing claim, and the converse direction is Lemma~\ref{si:mixing_lemma}(i). The per-capita incoming rate from chooser $j$ to a type-$r$ node is proportional to $U_{jr}^{\ell}/Z_j$ with denominator common to $r$ and $s$, so the rates differ iff $U_{jr}\neq U_{js}$. (At $\ell=0$ all outgoing distributions equal $\mathbf x$, hence the restriction $\ell\ge1$.) \emph{(ii)} If the best classes differ, the replicator drives the weight on $[f(r)]$, resp.\ $[f(s)]$, to $1$ (Proposition~\ref{si:prop-kgt2-hzero} applied to the quotient game, non-degenerate by assumption; equivalently, Proposition~\ref{si:gge_degenerate_attract} on the full game), so $q_r^{(\ell)}, q_s^{(\ell)}$ approach faces with disjoint supports, the finite-$\ell$ differential $\alpha_{r,f(r)} - \alpha_{r,j} = U_{r,f(r)} - U_{rj} > 0$ driving the separation. If the best classes coincide but the chooser-payoff ratios differ, the chooser payoffs are not proportional, so~(i) applies at every finite $\ell$; the separation, however, is $O(\rho^{\ell})$ with $\rho = \max\{U_{ij}/U_{i,f(i)} : i\in\{r,s\},\, j\notin[f(i)]\} < 1$. For unequal partner payoffs, $\alpha_{jr} - \alpha_{js} = U_{jr} - U_{js} \neq 0$ at $h = 0$, a direct selection pressure. \emph{(iii)} The merged model assigns block $m$ a single outgoing/incoming profile, fit at the mass-weighted average, e.g.\ $\tilde B_{mj} = (x_r B_{rj} + x_s B_{sj})/(x_r + x_s)$ and similarly for incoming and within-block entries; by~(i), $r \not\sim s$ forces $B_{rj} \neq B_{sj}$ or $B_{jr} \neq B_{js}$ for some $j$, and since $x_r, x_s \ge x_{\min} > 0$ the average differs from at least one true entry. The per-pair loss $D_{\mathrm{KL}}(B_{rj} \,\|\, \tilde B_{mj}) > 0$ (strict convexity), summed over the $\ge x_{\min}^2 N^2(1+o(1))$ affected dyads, exceeds the $O(\log N)$ penalty above a finite $N^*$; in the sparse regime each divergence is $\Theta(1/N)$ and the total $\Theta(N)$, still dominating. The dyad log-likelihood ratio is a sum of independent bounded terms and concentrates around this mean (Bernstein), giving preference with probability tending to one. 
Since the per-pair divergence $\delta$ depends on $\ell$ through $\mathbf B$, part~(ii) implies $N^\ast(\ell)\to\infty$ for shared-best-class pairs.
\end{proof}

\noindent \textit{Remark (scope).} Part~(iii) rejects the specific $\{r,s\}$-merge; ranking the $K$-block model above \textit{every} coarser node partition follows from the standard SBM argument, using that (i) rules out proportional rows and equal columns on the quotient.

\noindent \textit{Detectability caveat.} Likelihood separation is a population-level statement; recovering the partition from a single finite network additionally requires the spectral $\mathrm{SNR}$ to exceed the Kesten--Stigum threshold of $1$---binding precisely in the sparse regime, where the separation in~(iii) is $\Theta(N)$---which we translate into payoff coordinates in Section~\ref{si:detectability}.

\noindent \textit{Remark (minimal warranted partition).} Propositions~\ref{si:manifold_collapse}--\ref{si:persistence} together characterize the minimal warranted $K$-block partition of a strategic network: the quotient $[K]/{\sim}$ of the type space under strategic equivalence. Redundant blocks collapse onto the quotient, while non-equivalent blocks persist as long as the network is large enough for the signal to exceed the detectability threshold. The \FW thereby answers the two central questions of community detection: \textit{what is a community?} (a strategic equivalence class) and \textit{how many are there?} (the number of classes under $\sim$, which the inference can recover without enumerating $K$---the collapse is exactly penalty-free, and EM restarts land in the collapsed basin empirically; Results, ``Self-Regularization: Automatic Block Selection''). The third, statistical question---\textit{when can we detect them?}---is answered when the spectral $\mathrm{SNR}$ of the payoff-derived reduced matrix exceeds the Kesten--Stigum threshold, which imposes strict topological ceilings on finite data (Section~\ref{si:detectability}).


%% file: SI_estimator_EM.tex
\section{Variational EM} \label{si:EM}
We fit the constrained model using variational EM, treating the block assignments as latent variables. Let $q_{ir}$ denote the variational probability that node $i$ belongs to block $r$.

\paragraph{E-step.} Holding the model parameters fixed, we update the soft assignments by the standard mean-field rule \citep{daudin2008mixture}. For the baseline unweighted Bernoulli model, this is:
\[
q_{ir} \propto x_r \prod_{j \ne i} B_{r,x_j}^{A_{ij}}\,(1 - B_{r,x_j})^{1 - A_{ij}},
\]
where the product is taken with respect to the current expected assignments. Block frequencies are re-estimated as $x_r = N^{-1}\sum_i q_{ir}$. Empirical applications utilizing degree-correction replace this with the standard Poisson SBM counterpart.

\paragraph{M-step.} Given the expected sufficient statistics $\langle m_{rs} \rangle$ and $\langle \bar{m}_{rs} - m_{rs} \rangle$ under $q$, we maximize the expected complete-data log-likelihood with respect to the payoff parameters $\phi$. In the free SBM, this step yields the closed-form update $\hat{B}_{rs} = \langle m_{rs}\rangle / \langle\bar{m}_{rs}\rangle$. Under our \FW, the hard constraint $B_{rs} = f(M_{rs}(\phi))$---where $M_{rs}$ is the endogenous partner-choice flow and $f(\cdot)$ is the network-appropriate link function---eliminates this shortcut. We instead solve the bounded optimization problem:
\[
\hat{\phi} = \arg\min_{\phi} \; \bigl[-\mathcal{Q}(\phi)\bigr], \qquad \mathcal{Q}(\phi) = \kappa \sum_{r,s}\bigl[\langle m_{rs}\rangle \log B_{rs}(\phi) + \langle \bar{m}_{rs} - m_{rs}\rangle \log(1 - B_{rs}(\phi))\bigr],
\]
using L-BFGS-B with box constraints on each payoff entry, where $\kappa = \frac{1}{2}$ for undirected networks and $\kappa = 1$ for directed networks.

\paragraph{Initialization.} To avoid poor local optima, we run the full EM from multiple random restarts. The first restart is initialized by spectral clustering of $\textbf{A}$ and the second by $k$-means on the leading eigenvectors; subsequent restarts use uniformly random soft assignments. Within each restart, an M-step is performed before the first E-step, translating the structured initialization directly into payoff parameters and preventing the initial E-step from collapsing the partition. The solution with the highest ELBO across all restarts is retained.

\section{Degree-Corrected Variant} \label{SI:degree_correction}
The model naturally extends to the degree-corrected SBM \citep{karrer2011stochastic} by introducing node-level propensity parameters $\theta_i$ and replacing the Bernoulli likelihood with its Poisson counterpart. Under this specification, the expected number of directed edges from node $i$ to node $j$ is modeled as $\lambda_{ij} = \theta_i \theta_j B_{\theta_i \theta_j}$. The E-step and M-step are modified accordingly; the game-theoretic constraint on \bB is unchanged.

%% file: SI_model_comparison.tex
\clearpage
\section{Model Comparison} \label{SI:model_comparison}
We compare the constrained and free SBMs using multiple criteria, evaluated from converged fits using hard class assignments $\hat{\boldsymbol{x}}$ (obtained by $\arg\max_r q_{ir}$ for variational EM) and full posterior draws (via MCMC sampling).

\paragraph{Description Length (BIC).} We define
\[
\Sigma_{\text{BIC}} = -\log P(\mathbf{A} \mid \hat{\boldsymbol{x}}, \hat{\mathbf{B}}) + L(\hat{\boldsymbol{x}}) + \tfrac{d}{2}\log |E|,
\]
where $L(\hat{\boldsymbol{x}})$ is the combinatorial cost of encoding the partition \citep{peixoto2017nonparametric} and $d$ is the number of free parameters. Lower $\Sigma_{\text{BIC}}$ indicates a better model.

\paragraph{Description Length (Laplace).} Replacing the BIC penalty with a Laplace approximation to the marginal likelihood yields
\[
\Sigma_{\text{Lap}} = -\log P(\mathbf{A} \mid \hat{\boldsymbol{x}}, \hat{\mathbf{B}}) + L(\hat{\boldsymbol{x}}) + \tfrac{d}{2}\log\frac{|E|}{2\pi} + \tfrac{1}{2}\log\det \mathbf{I}(\hat{\phi}),
\]
where $\mathbf{I}(\hat{\phi})$ is the Fisher information matrix. For the free SBM, $\mathbf{I}$ is diagonal and analytic; for the constrained model, we compute it by central finite differences of the negative log-likelihood with respect to the payoff parameters.

\paragraph{Vuong Test.} To test whether the two models are statistically distinguishable at the level of individual observations, we apply the Vuong likelihood-ratio test \citep{vuong1989likelihood}. Under the null hypothesis that both models are equally close to the true data-generating process, the test statistic is asymptotically standard normal, providing a $p$-value for model equivalence.


\paragraph{Bayesian Out-of-Sample Prediction.} As a fully Bayesian, non-asymptotic check, we evaluate pointwise out-of-sample prediction accuracy using the Watanabe-Akaike Information Criterion (WAIC) and Pareto Smoothed Importance Sampling Leave-One-Out Cross-Validation (PSIS-LOO-CV) \citep{vehtari2017practical} calculated from the MCMC posterior traces.

%% file: SI_detectability.tex
\section{Detectability Ceilings of the \oursfull}
\label{si:detectability}

The main text outlines two topological constraints on community detectability in the \ours: a structural ceiling on the signal-to-noise ratio ($\mathrm{SNR}$) for dilemma games, and a fundamental detectability nullspace across all selection intensities $\ell$. Both follow from the limiting structure of the joint edge-type distribution $\mathbf{M}$ implied by the partner-choice equations. We work under the homogeneous, regular-graph assumption with soft community sizes $n_X = xN$, $n_Y = (1-x)N$, target mean degree $\langle k \rangle$, and the reduced mixing matrix $C_{kl} = \sqrt{n_k n_l}\,B_{kl}$.

The manifold collapse (Proposition~\ref{si:manifold_collapse}) has an exact spectral fingerprint. In the reduced mixing matrix $\mathbf{C}$ equivalence of $r$ and $s$ makes their $\mathbf{C}$-profiles proportional (Lemma~\ref{si:mixing_lemma}), so $\mathbf{C}$ is rank-deficient on the $r$--$s$ contrast and the eigenvalue encoding their distinction is exactly zero; non-equivalence lifts it away from zero. The number of detectable communities is thus bounded by the count of eigenvalues exceeding the Kesten--Stigum threshold.

\paragraph{Symmetrization and the Reduced Mixing Matrix.}
For payoffs $\bigl[\begin{smallmatrix}a & c\\b & d\end{smallmatrix}\bigr]$, the partner-choice equations yield a joint edge-type distribution $\mathbf{M}$. Because spectral community detection via the Kesten--Stigum criterion evaluates undirected graphs, we analyze the symmetrized joint distribution $S = \frac{1}{2}(\mathbf{M} + \mathbf{M}^T)$.

To leading order in large networks, the reduced mixing matrix $\mathbf{C}$, which governs the spectral gap, is defined by $C_{kl} = \langle k \rangle \frac{S_{kl}}{\sqrt{x_k x_l}}$. Community structure is detectable if the second eigenvalue satisfies $\mathrm{SNR} = |\lambda_2(\mathbf{C})| / \sqrt{\langle k \rangle} > 1$.

\paragraph{Topological Ceiling in Dilemma Games.}
In dilemma games, as $\ell \to \infty$, the directed distribution converges to a rank-one, column-concentrated matrix:
\begin{equation}
\mathbf{M} \;\xrightarrow[\ell\to\infty]{}\;
\begin{bmatrix} x & 0 \\ 1-x & 0 \end{bmatrix}.
\end{equation}
Symmetrizing this directed topology yields:
\begin{equation}
\mathbf{S} = \begin{bmatrix} x & \frac{1-x}{2} \\ \frac{1-x}{2} & 0 \end{bmatrix}.
\end{equation}
Substituting $\mathbf{S}$ into the definition for $\mathbf{C}$ provides the limiting reduced mixing matrix:
\begin{equation}
\mathbf{C} = \langle k \rangle \begin{bmatrix} 1 & \frac{1}{2}\sqrt{\frac{1-x}{x}} \\ \frac{1}{2}\sqrt{\frac{1-x}{x}} & 0 \end{bmatrix}.
\end{equation}
The eigenvalues of $\mathbf{C}$ are $\lambda = \frac{\langle k \rangle}{2} \left( 1 \pm \sqrt{1 + \frac{1-x}{x}} \right)$. 
For a balanced community ($x=0.5$), this yields $\lambda_2 = \langle k \rangle \frac{1-\sqrt{2}}{2}$. The resulting maximum signal is exactly $\mathrm{SNR}_{\max}^{\text{dilemma}} = \frac{\sqrt{2}-1}{2} \sqrt{\langle k \rangle} \approx 0.207 \sqrt{\langle k \rangle}$. Consequently, the structure is fundamentally undetectable if $\langle k \rangle < 4/(\sqrt{2}-1)^2 \approx 23.3$. 
The detectability limit $\langle k \rangle \approx 23.3$ is therefore a universal constant for this family of dynamics.

\paragraph{The Payoff-Ratio Nullspace ($ad = bc$) and Spurious Spectral Signals.}
A fundamental detectability nullspace arises when both types exhibit identical relative preferences for partners, collapsing any latent community structure into pure popularity bias. This occurs exactly along the locus of balanced within-row payoff ratios, $a/c = b/d$ (equivalently $ad = bc$).

Along this nullspace, both $X$ and $Y$ types distribute their outgoing links according to the exact same target probability distribution $q$. The directed joint edge-type distribution therefore collapses to a rank-one outer product of the source frequencies and the universal target preferences: $M_{ij} = x_i q_j$. Crucially, unless payoffs are completely flat ($a=c$ and $b=d$), the target preference differs from the population baseline ($q_j \neq x_j$), meaning $x_i q_j \neq x_j q_i$ and the matrix \bM\ is fundamentally asymmetric.

When evaluating undirected graphs, spectral community detection algorithms evaluate the symmetrized matrix $\mathbf{S} = \frac{1}{2}(\mathbf{M} + \mathbf{M}^T)$. Because \bM\ is an asymmetric rank-one outer product, its symmetrization $\mathbf{S}$ is mathematically forced to be rank-two with a strictly negative determinant. Consequently, substituting this $\mathbf{S}$ into the definition for $\mathbf{C}$ yields a reduced mixing matrix with a spurious, strictly negative second eigenvalue ($\lambda_2 < 0$). Uncorrected spectral methods will therefore falsely detect a structural signal, confounding the directed popularity imbalance (the entire population universally preferring one partner type) with true network disassortativity.

While the failure of uncorrected spectral methods in the presence of degree heterogeneity is a known structural artifact \citep{karrer2011stochastic}, our framework reveals its exact game-theoretic origin. We show that this spectral illusion is deterministically triggered along the payoff-ratio locus $ad = bc$, demonstrating how balanced payoff ratios alone spontaneously generate the macroscopic degree imbalances that confound standard community detection.

\paragraph{Degree Correction and Gauge Invariance.}
The analytical ceiling derived for dilemma games and the spurious disassortative signals along the $ad = bc$ nullspace represent the worst-case detectability under the most parsimonious (Bernoulli, non-degree-corrected) model class. At the nullspace, the true structural community signal is fundamentally zero---the mutual information between source and target types vanishes exactly. The degree-corrected Poisson SBM successfully navigates this. The degree parameters ($\theta_i$) absorb the prominent degree-imbalance and popularity components inherent to these rank-one topologies, correctly returning $\mathrm{SNR} = 0$ along the nullspace and extending the identifiable regime substantially beyond the uncorrected spectral limits elsewhere.

%% file: SI_game_taxonomy_table.tex
\begin{table}[h!]
\renewcommand{\arraystretch}{1.4}

\begin{tabular}{>{\raggedright\arraybackslash}m{2.6cm} C{3cm} C{3cm} C{3cm} C{3cm}}
\toprule
& \textbf{Assortative} & \textbf{Disassortative} & \textbf{X-Absorptive} & \textbf{Y-Absorptive} \\
& {\footnotesize $a>c,\; d>b$} & {\footnotesize $a<c,\; d<b$} & {\footnotesize $a>c,\; d<b$} & {\footnotesize $a<c,\; d>b$} \\
\midrule
\textbf{Coordination} &
  \textbf{Coordination} &
  \cellcolor{lightgray} --- &
  \textbf{Stag Hunt} &
  \cellcolor{lightgray} --- \\[-4pt]
{\footnotesize $a>b,\; c<d$} &
  {\footnotesize Standard / Alt} &
  \cellcolor{lightgray} &
  {\footnotesize Team / Solo} &
  \cellcolor{lightgray} \\[6pt]

\textbf{Conflict} &
  \cellcolor{lightgray} --- &
  \textbf{Anti-Coord} &
  \textbf{Hawk-Dove} &
  \cellcolor{lightgray} --- \\[-4pt]
{\footnotesize $a<b,\; c>d$} &
  \cellcolor{lightgray} &
  {\footnotesize Provider / Consumer} &
  {\footnotesize Dove / Hawk} &
  \cellcolor{lightgray} \\[6pt]

\textbf{Dilemma} &
  \cellcolor{lightgray} --- &
  \cellcolor{lightgray} --- &
  \textbf{Dilemma} &
  \cellcolor{lightgray} --- \\[-4pt]
{\footnotesize $a<b,\; c<d$} &
  \cellcolor{lightgray} &
  \cellcolor{lightgray} &
  {\footnotesize Producer / Exploiter} &
  \cellcolor{lightgray} \\[6pt]

\textbf{Delight} &
  \textbf{Synergy} &
  \textbf{Mutualism} &
  \textbf{Harmony} &
  \textbf{Dissonance} \\[-4pt]
{\footnotesize $a>b,\; c>d$} &
  {\footnotesize Mainstream / Niche} &
  {\footnotesize Host / Symbiont} &
  {\footnotesize Attractor / Satellite} &
  {\footnotesize Invasive / Native} \\

\bottomrule\\[-10pt]
\end{tabular}
\caption{
    \textbf{Mapping between the four $K=2$ network topologies (rows) and game-theory Nash equilibria (columns) and their semantically interpretable role labels.}
    }
\label{SI:roles}
\end{table}

%% file: SI_causal_emergence.tex
\clearpage
\section{Affinity Imbalance Signatures}
The four-topology classification (Result~1; Figure \ref{fig:panel1_conceptual}) establishes analytically that distinct games leave distinct type-level signatures. We visualize this purely as a diagnostic, to make two points concretely: (a) the same Nash equilibrium can generate networks spanning all four topologies, each with a distinct signature; and (b) one topology can be generated by four different Nash games, indistinguishable under a standard block model yet immediately separable in their recovered payoffs under the partner-choice model.

We plot the \textbf{affinity imbalance} $M_{XY} - M_{YX}$---the difference between the share of edges directed $X \to Y$ and the share directed $Y  \to X$---as a function of rationality $\ell$. It is dimensionless (a difference of edge proportions), bounded in $[-0.5,+0.5]$ at balanced populations ($x=0.5$), with the sign identifying which strategy type is the absorbing core (negative $\Rightarrow X$, positive $\Rightarrow Y$). We show networks from the grid in Figure \ref{fig:panel1_conceptual} with payoffs $b, c \in \{0.17, 0.5, 0.83\}$.

\textit{Same Nash, four topologies} (Figure \ref{fig:affinity_imbalance}a). 
Within the X-dominant (Delight) Nash region, the four topology classes trace distinct affinity imbalance curves. The Dissonance game shows positive affinity imbalance whereas the Harmony game shows negative. The Synergy and Mutualism games show near-reciprocal attraction but via oppositely leaning trajectories at finite rationality. All four occupy distinct regions of the axis.

\textit{Same topology, four Nashes} (Figure \ref{fig:affinity_imbalance}b). Four Nash games (Dilemma, Stag Hunt, Hawk-Dove, Harmony) chosen to land in the X-absorptive region trace four distinct curves on the negative side of the axis, ordered by absorption strength from Dilemma (most negative) to Harmony (least). The curves converge as $\ell \rightarrow \infty$ yet their trajectories at finite rationality have distinct fingerprints.

\begin{figure}[h!]
  \begin{overpic}[width=0.48\textwidth]{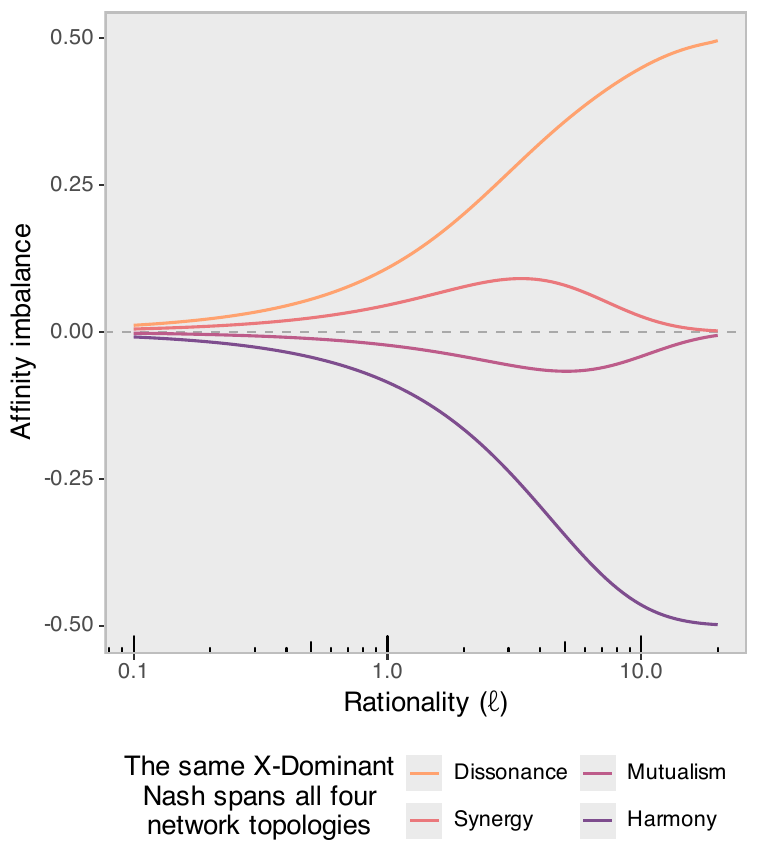}
    \put(2,96){\textsf{\textbf{a}}}
  \end{overpic}%
  \hfill
  \begin{overpic}[width=0.48\textwidth]{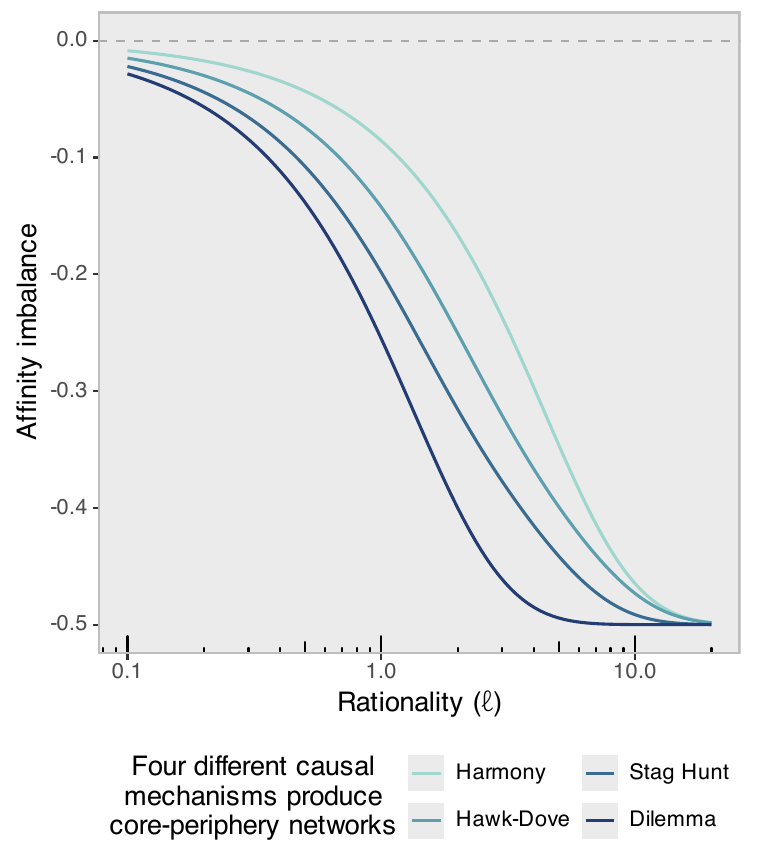}
    \put(2,96){\textsf{\textbf{b}}}
  \end{overpic}
    \caption{
        \textbf{The causal generative mechanism of \ours differentiates different emergent topologies.}
        \textbf{a} Four different instances of the same X-dominant Nash equilibirum, but with payoffs resulting in four different game-graph equilibria.
        \textbf{b} All four networks are core-periphery networks, but with payoffs from different games (different Nash).
    }
    \label{fig:affinity_imbalance}
\end{figure}

%% file: table_datasets.tex
\begin{table}[h!t]
\centering
\small
\setlength{\tabcolsep}{4pt}
\renewcommand{\arraystretch}{1.15}
\begin{tabular}{l l p{4.5cm} p{5.8cm}}
\toprule
\textbf{Dataset} & \textbf{Size} & \textbf{Source} & \textbf{Notes} \\
\midrule

Zachary karate club &
$N{=}34$, $E{=}78$ &
\texttt{nx.karate\_club\_graph()}; \citep{zachary1977information} & \\

Florentine families &
$N{=}15$, $E{=}20$ &
\texttt{nx.florentine\_families\_graph()}; \cite{breiger1986cumulated} & \\

Lusseau dolphins &
$N{=}62$, $E{=}159$ &
KONECT edge list (\texttt{dolphins}), \url{http://konect.cc/networks/dolphins/}; \citep{lusseau2003bottlenose} & \\

Les Mis\'erables &
$N{=}77$, $E{=}254$ &
\texttt{nx.les\_miserables\_graph()}; Knuth, \emph{The Stanford GraphBase} (1993) &  \\

PolBlogs &
$N{=}1{,}222$, $E{=}16{,}714$ & 
\citep{adamic2005political} &
Symmetrized and restricted to the largest connected component (1{,}222 of the 1{,}490 original nodes). \\

\textit{C.~elegans} PPI &
$N{=}2{,}528$, $E{=}3{,}706$ & \citep{wang2023assessment} &  \\

\textit{S.~cerevisiae} PPI &
$N{=}2{,}018$, $E{=}2{,}705$ & \citep{wang2023assessment} & \\

\textit{A.~thaliana} PPI &
$N{=}2{,}774$, $E{=}6{,}045$ & \citep{wang2023assessment} & \\

\textit{H.~sapiens} PPI &
$N{=}8{,}272$, $E{=}52{,}068$ & \citep{wang2023assessment} & \\

Trade &
\begin{tabular}[t]{@{}l@{}}
median:\\ $N{\in}[207, 228]$, \\
\hspace{1.2em} $E{\in}[5{,}194,\,9{,}475]$ \\
density:\\ $N{\in}[98, 144]$, \\
\hspace{1.2em} $E{\in}[977,\,1{,}862]$
\end{tabular} &
UN COMTRADE, 1994--2024 (31 years) &
Bilateral, country-level goods-trade flows, mirror-averaged across exporter and importer self-reports. For each of the 31 years, we construct a directed network by retaining each country's top-20 outgoing trade partners by volume. \\

\bottomrule\\[-7pt]
\end{tabular}
    \caption{
        \textbf{Datasets used in the empirical analyses.} 
        Sizes report node count $N$ and undirected edge count $E$ after the manipulation listed in the Notes column. Networks are undirected except the bilateral trade networks. For the trade panel, the ranges span all 31 years (1994--2024) under each thresholding strategy; the smallest year is 1994 in both, with peak density around 2013--2021.}
\label{tab:datasets}
\end{table}

%% file: table_results_datasets.tex
\begin{table}[htbp]
\centering
\footnotesize
\setlength{\tabcolsep}{4pt}
\begin{tabular}{@{}lll l l D{.}{.}{9.0} RRll@{}}

\toprule
Dataset & Model & Freqs & Payoffs $(b,\,c)$ & Game & ELBO & {$\Delta\Sigma_{\mathrm{BIC}}$} & {$\Delta\Sigma_{\mathrm{Lap}}$} & Vuong & Verdict \\
\midrule

\multirow{2}{*}{Florentine}
  & \ours & 0.454, 0.546 & $b{=}0.35,\; c{=}0.70$ & Mutualism       & $-49.44$        & $-1.57$ & --      & $0.870$        & Collapse to single group; near random mix    \\
  & Free SBM        & 0.510, 0.490 & --                      & --            & $-49.39$        &         &         &                   & random mix                \\
\midrule

\multirow{2}{*}{Zachary}
  & \ours & 0.484, 0.516 & $b{=}0.19,\; c{=}0.34$ & Synergy       & $-190.65$       & $-2.62$ & $-7.05$ & $0.610$        & Game-theoretic (strong)   \\
  & Free SBM        & 0.518, 0.482 & --                      & --            & $-190.48$       &         &         &                   &   \\
\midrule

\multirow{2}{*}{Dolphins}
  & \ours & 0.668, 0.332 & $b{=}0.14,\; c{=}0.01$ & Coordination  & $-466.54$       & $-2.59$ & $-11.23$& $0.768$        & Game-theoretic (decisive) \\
  & Free SBM        & 0.667, 0.333 & --                      & --            & $-466.30$       &         &         &                   &              \\
\midrule

\multirow{2}{*}{Les Miserables}
  & \ours & 0.137, 0.863 & $b{=}0.20,\; c{=}0.17$ & Coordination       & $-657.33$       & $-3.13$ & $-8.93$ & $0.156$        & Game-theoretic (strong)   \\
  & Free SBM        & 0.861, 0.139 & --                      & --            & $-657.08$       &         &         &                   &                           \\
\midrule

\multirow{2}{*}{Polblogs}
  & \ours & 0.458, 0.542 & $b{=}0.16,\; c{=}0.24$ & Coordination  & $-50{,}700.71$  & $-5.04$ & $-18.33$& $0.744$        & Game-theoretic (decisive) \\
  & Free SBM        & 0.458, 0.542 & --                      & --            & $-50{,}700.67$  &         &         &                   &                           \\
\midrule

\multirow{2}{*}{C.elegans}
  & \ours & 0.504, 0.496 & $b{=}0.33,\; c{=}0.67$ & Mutualism       & $-23{,}617.60$  & $-4.11$ & --      & $0.990$        & Undifferentiated payoffs (random mix) \\
  & Free SBM        & 0.503, 0.497 & --                      & --            & $-23{,}617.60$  &         &         &                   &     \\
\midrule

\multirow{2}{*}{Yeast}
  & \ours & 0.503, 0.497 & $b{=}0.33,\; c{=}0.67$ & Mutualism       & $-17{,}475.95$  & $-3.95$ & --      & $0.990$        & Undifferentiated payoffs (random mix) \\
  & Free SBM        & 0.500, 0.500 & --                      & --            & $-17{,}475.95$  &         &         &                   &       \\
\midrule

\multirow{2}{*}{A.thaliana}
  & \ours & 0.672, 0.328 & $b{=}0.12,\; c{=}0.19$ & Coordination  & $-33{,}641.18$  & $-4.72$ & $-25.36$& $0.816$        & Game-theoretic (decisive) \\
  & Free SBM        & 0.328, 0.672 & --                      & --            & $-33{,}641.10$  &         &         &                   &              \\
\midrule

\multirow{2}{*}{H.sapiens}
  & \ours & 0.266, 0.734 & $b{=}0.15,\; c{=}0.15$ & Coordination  & $-269{,}708.56$ & $-5.55$ & $-27.52$& $\mathbf{0.024}$ (c)  & Game-theoretic (decisive) \\
  & Free SBM        & 0.266, 0.734 & --                      & --            & $-269{,}708.55$ &         &         &                   &                           \\


\bottomrule\\[2pt]
\end{tabular}
    \caption{
        \textbf{Model comparison across network datasets.} 
        Estimates from EM, 1,000 restarts, $\ell=3.0$. Vuong test indicates whether the significant $p$-value is in favor of the constrained (c) or free (f) model.
$\Delta\Sigma_{\mathrm{Lap}}$ is undefined when the Fisher information at the MLE is not positive-definite (e.g., an empty block makes the free SBM's $M_{rs} = 0$, or the constrained model recovers undifferentiated payoffs ($c=a$ and $b=d$).}
\label{tab:empirical_results}
\end{table}

%% file: SI_gibbs.tex
\section{Fully Bayesian Inference and Model Validation} \label{si:gibbs}
To rigorously verify that our EM results are not artifacts of variational approximations or boundary singularities, we implement a fully Bayesian MCMC estimator. Because the non-linear mapping from game-theoretic payoffs to the edge probability matrix lacks a conjugate prior, we embedded a Metropolis-Hastings step within a standard Gibbs sampling routine. As with the EM estimator, we ensure identifiability by gauge-fixing the diagonal payoffs and address scaling degeneracy inherent to softmax choice models by fixing $\ell = 3.0$. We evaluate the posterior volume using the Watanabe-Akaike Information Criterion (WAIC) and Pareto Smoothed Importance Sampling Leave-One-Out Cross-Validation (PSIS-LOO-CV). We evaluate the Bayesian estimator on three synthetic datasets as well as the international trade network (from 2010). All synthetic datasets are directed and use $N=500$, $x=0.5$. We fit the Bayesian model with a fixed $\ell = 3.0$ and two chains and 1500 samples post-burn-in per chain (burn-in 750, thin 2) on the synthetic datasets and four chains and 5,000 on the sparser empirical trade network. 


Concretely, the sampler cycles through four conditional updates per iteration: (1) node labels $x_i$ are drawn categorically from their full conditional given the current edge-probability matrix and mixing weights; (2) mixing proportions $\boldsymbol{x}$ are drawn from a conjugate Dirichlet posterior; (3) in degree-corrected mode, the node propensities $\theta_i$ are drawn from conjugate Gamma posteriors with per-community renormalization to preserve identifiability with \bB; and (4) the off-diagonal payoffs are updated by a random-walk Metropolis–Hastings step. The game-theoretic constraint is enforced in this final step: each MH proposal perturbs the $K^2 - K$ free payoff entries and then deterministically maps the proposed payoffs through the softmax partner-choice equations and the appropriate link function to obtain the implied edge-probability matrix \bB, so \bB is never sampled freely but is always the image of the payoff surface. It means the sampler is exploring nonlinear, lower-dimensional manifold embedded in the full SBM parameter space, while the MH step is doing the heavy lifting to navigate that surface. The MH proposal variance is tuned during burn-in via Robbins–Monro adaptation targeting a 30\% acceptance rate (this tuning is very successful and observed acceptance rates are typically between 29-35\%), and five MH sub-iterations are executed per Gibbs sweep to improve mixing in the payoff subspace. A $\mathrm{Beta}(2,2)$ prior on each free payoff gently regularizes the posterior away from the $0/1$ boundary, which stabilizes inference on sparse or small graphs. Each chain is warm-started from a short EM fit at the same ($K, \ell$, and gauge fix) configuration, and cross-chain label switching is resolved post hoc by aligning all chains to a reference labeling via brute-force permutation search over the $K!$ relabelings. 

\paragraph{Positive Control: Constrained Ground Truth.} 
We generate a synthetic dataset with planted coordination payoffs ($b=0.1, c=0.3$) and fit our Bayesian Gibbs model. R-hats for the payoff estimates are 1.000 and ESS $[b=1809, c=2611]$. The model successfully recovers the coordination game structure ($a > c$ and $d > b$) and population frequencies almost perfectly $X = 0.4998$ [95\% CI: $0.498 - 0.502$], though individual payoffs shift along a sloppy direction in payoff space that leaves the implied \bB matrix nearly unchanged ($b = 0.217$ [95\% CI: $0.196 - 0.238$] and $c = 0.142$ [95\% CI: $0.132 - 0.153$]). WAIC prefers the constrained model by 4.32 on WAIC scale ($0.897$ posterior probability mass on the preferred model). PSIS-LOO-CV prefers the constrained model by 4.77 nats (LOOIC); with $0.916$ posterior probability mass (constrained ELPD$_\text{loo}$ = -12718.95 (SE: 206.37); free model ELPD$_\text{loo}$ = -12721.34 (SE: 206.30). About ~0.5\% of leave-one-out observations have Pareto $k > 0.70$ (slightly worse in the free model). The nat differences are consistent across both criteria. Even if the exact ELPD is slightly off due to PSIS instability, the ranking and magnitude agree with WAIC.

\paragraph{Negative Control: Free Ground Truth.} 
We generate a synthetic dataset that sits far off the lower-dimensional manifold of the game-theoretic constrained space, violently violating the modeling assumptions of \ours thus giving a clear advantage to the flexibility of the free SBM (implied \bM row sums $(0.25, 0.75)$ violate the constrained partner-choice equations' structural identity; block 1 emits $~3\times$ more out-edges than equal block sizes would predict). R-hats are 1.000 and ESS [1318, 2269]; however, PSIS reliability is poor as would be expected with 14\% of observations showing $k > 0.7$ given the constrained model is genuinely misspecified for this data. The free model is preferred with 1011.69 on the WAIC scale criterion and 992.20 nats using PSIS-LOO-CV---a massive advantage in favor of the free model. 

\paragraph{Neutral Control: ER Graph.} 
We fit both models to an ER random graph. R-hats are 1.002 and ESS [2238, 2198] and zero observations with Pareto $k > 0.70$. The free model has zero bad k values, the constrained has some ($\approx 2\%$) which is not surprising given node assignments are meaningless in this network). Both criteria agree on a small prediction advantage of the constrained model (WAIC = 1.83; and PSIS-LOO-CV = 2.21 nats). Furthermore, \ours collapsed all nodes into a single cluster with $X = 0.0007$ [95\% CI: $0.0001 - 0.0020$] and $Y = 0.9993$ [95\% CI: $0.9980 - 0.9999$], while the free SBM did not (otherwise LOO-CV would be identical). Furthermore, the predicted payoffs revert to their priors with very large confidence intervals ($c = 0.51$ [95\% CI: $0.06 - 0.99$], $b = 0.44$ [95\% CI: $0.08 - 0.91$]). Because \ours separates structural availability from behavioral preference, attempting to fit a two-block model to pure noise forces the preference parameters into an infinitesimally small volume of the parameter space (perfect indifference). The MCMC naturally rejects this, collapsing the network to a single block. This confirms that grounding the SBM in a strict game-theoretic choice dynamic provides a structurally built-in defense against overfitting that the unconstrained SBM lacks.

\paragraph{Expressiveness: Cyclic Dominance in $K=3$.}
To show the constrained model is not confined to assortative structure, we plant an intransitive rock--paper--scissors game on $K=3$ blocks: a directed cycle $X\!\to\!Y\!\to\!Z\!\to\!X$ with cycle payoffs $0.9$ and back-edge payoffs $0.05$. This structure carries no within-block density signal and is qualitatively different from what games with only two actions could represent. The Gibbs model recovers the cycle cleanly (R-hats $1.000$--$1.001$; ESS $875$--$2090$): the cycle payoffs $M_{XY}=0.923$ [95\% CI: $0.879 - 0.967$], $M_{YZ}=0.907$ [95\% CI: $0.855 - 0.962$], and $M_{ZX}=0.858$ [95\% CI: $0.782 - 0.948$] dominate their back-edges ($\leq 0.10$), and block frequencies recover the equal-sized blocks nearly perfectly ($x_x=0.333$, $x_y=0.335$, $x_z=0.332$). Both model fit criteria again prefer the constrained model---WAIC by $7.79$ ($0.980$ posterior probability mass) and PSIS-LOO-CV by $7.39$ nats (LOOIC; $0.976$ mass; constrained ELPD$_\text{loo}=-12051.52$ (SE: $195.72$), free $-12055.21$ (SE: $195.60$)), with $\sim$$0.6\%$ of leave-one-out observations at Pareto $k > 0.70$. 
This confirms the framework recovers intransitive, non-assortative organization and wins on parsimony.

\paragraph{International Trade Network (2010).}
On the international trade network for 2010 ($N=231$), the Bayesian model finds nearly identical $b, c$ payoffs and block frequencies as the EM model (Table \ref{tab:results_gibbs}). That is, the Bayesian integration resolved all boundary singularities encountered during optimization suggesting the game-theoretic constraint successfully regularize the parameter space yielding robust and theoretically sound EM estimates. 

This also confirms our core substantive insights about the trade network structure: the game-theoretic model infers a directed core-periphery with an asymmetric extraction game dynamic (i.e., an unequal exchange game). The core trades with the periphery, extracting a premium over trading internally ($b > a$); the core shows highly profitable within-core trade but lacks the exploitative premium of trading with the periphery (e.g., cheap labor and raw materials). Countries in the periphery receive only a low payoff from trading with other peripheral countries (i.e., trade between developing nations, constrained by lower capital and purchasing power). Periphery trades with the core yields the lowest payoff resulting from deteriorating terms of trade. ``Core'' is a strictly dominant strategy ($a>c$ and $b>d$). While the core strategy is appealing in economic terms, it is driven by industrialization, technological supremacy, and capital accumulation and peripheral countries cannot simply choose to adopt ``core'' as their strategy as it requires overcoming significant barriers to entry.
The constrained and free models achieve nearly identical predictive fit: $\Delta \text{WAIC} = +1.18$ nats in favor of the free model, while $\Delta \text{LOOIC} = -3.99$ nats in favor of the constrained model. 

R-hats of 1.007 indicate that chains have converged to the same stationary distribution, but effective sample sizes are modest (ESS: [112, 212]). This is expected: the trade network is sparse relative to its node count ($N=231$, density 0.087), so the likelihood surface is flat in many directions, and the nonlinear payoff-to-\bB mapping concentrates the posterior on a narrow, curved manifold that random-walk proposals traverse slowly. Importantly, low ESS affects the \textit{precision} of posterior summaries (widening credible intervals) but does not bias them. PSIS diagnostics are acceptable, with fewer than 0.6\% of observations flagging Pareto $k > 0.7$ under either model. The near-equivalence in out-of-sample predictive performance suggests that the game-theoretic parameterization adequately captures the block structure of this network without sacrificing fit relative to the unconstrained alternative, confirming our EM estimate conclusion that the trade network is well explained by the generative mechanism of the \FW. 

\input{table_results_gibbs}

\paragraph{A Sampling Signature of Hard Collapse.} We repeat the self-regularization demonstration with the Bayesian estimator, using the Zachary Karate Club network as a sample case. As before, we fit the constrained $K=3$ model with the two peripheral blocks' diagonal payoffs pinned to a common value (gauge $2/3,\,1/3,\,1/3$), making them \textit{a priori} exchangeable; if the data do not demand a distinct third type, the manifold collapses to a $K=2$ configuration, and the model should find it. Under Gibbs sampling the collapse registers differently from the EM. The split-ratio null direction (Proposition~\ref{si:manifold_collapse}) leaves the $K=3$ posterior non-identified along a flat ridge: EM escapes it by basin selection across restarts, but a random-walk sampler cannot traverse a flat ridge. The $K=3$ chains therefore exhibit catastrophic mixing failure (high $\hat R$, very low ESS) on precisely the networks where EM detects collapse. This contrasts with the $K=2$ case where chains mix cleanly ($\hat R \approx 1.0$, high ESS). The same geometric degeneracy that EM resolves cleanly is what breaks MCMC sampling.



%% file: table_results_gibbs.tex

\begin{table}[t]
\centering
\small
\begin{tabular}{%
  l                    
  l                    
  l                    
  l                    
  l                    
}
\toprule
Model &
Block Freqs &
Payoffs &
\multicolumn{1}{c}{Diagnostics} &
Verdict \\
\midrule
  Free SBM$_\text{EM}$ & X = 0.76, Y = 0.23 & & ELBO -11925.41 \\[4pt]
  
  GTBM$_\text{EM}$ & X = 0.22, Y = 0.78 & $b{=}0.73, c{=}0.22$ & ELBO -11929.33, $\Delta\Sigma_{\mathrm{BIC}}$ = -4.62 & game-theoretic (decisive) \\
  & & & $\Delta\Sigma_{\mathrm{Lap}}$ -16.94 $p_\text{Vuong}$ = 0.058 \\[4pt]
  
  GTBM$_\text{Bayesian}$ & X = 0.22 [0.22, 0.23] &
                              b = 0.78 [0.77, 0.78] & $\Delta$WAIC = +1.18  & game-theoretic (moderate) \\

                             & Y = 0.77 [0.77, 0.78] &
                              c = 0.22 [0.22, 0.23] & $\Delta$ LOOIC = -3.99 \\[4pt]

\bottomrule\\[-5pt]
\end{tabular}
    \caption{
        \textbf{Results of Bayesian analysis on the trade$_{2010}$ dataset.}
    }
    \label{tab:results_gibbs}
\end{table}

%% file: table_trade_forecasting.tex
\begin{table}[h!]
  \centering
  \newcolumntype{d}[1]{D{.}{.}{#1}}
  \begin{tabular}{
    l
    d{1.3}
    d{1.3}
    d{1.3}
    r
    c
    l
    l
    l
  }
    \toprule
    Comparison &
    \multicolumn{1}{c}{Ours MAE} &
    \multicolumn{1}{c}{Base MAE} &
    \multicolumn{1}{c}{$|\Delta M|$} &
    \multicolumn{1}{c}{\% improv.} &
    \multicolumn{1}{c}{Wins} &
    \multicolumn{1}{c}{Wilcox $p$} &
    \multicolumn{1}{c}{$t$-test $p$} \\
    \midrule
    \multicolumn{8}{l}{\itshape $t \rightarrow t+1$} \\
    Ours vs Random &
      0.0036 & 0.2831 & 0.0068 & $+98.7\%$ & 30/30 &
      $\mathbf{9.31\mathrm{e}{-10}}$ &
      $\mathbf{< 10^{-10}}$ \\
    Ours vs Inertia &
      0.0036 & 0.0038 & 0.0068 & $+4.8\%$  & 17/30 &
      $\mathbf{0.0249}$ &
      $\mathbf{0.0176}$ \\
    \midrule
    \multicolumn{8}{l}{\itshape $t \rightarrow t+2$} \\
    Ours vs Random &
      0.0051 & 0.2825 & 0.0093 & $+98.2\%$ & 29/29 &
      $\mathbf{1.86\mathrm{e}{-9}}$ &
      $\mathbf{< 10^{-10}}$ \\
    Ours vs Inertia &
      0.0051 & 0.0053 & 0.0093 & $+4.3\%$  & 19/29 &
      $\mathbf{0.0240}$ &
      $\mathbf{0.0088}$ \\
    \midrule
    \multicolumn{8}{l}{\itshape $t \rightarrow t+3$} \\
    Ours vs Random &
      0.0059 & 0.2821 & 0.0101 & $+97.9\%$ & 28/28 &
      $\mathbf{3.73\mathrm{e}{-9}}$ &
      $\mathbf{< 10^{-10}}$ \\
    Ours vs Inertia &
      0.0059 & 0.0061 & 0.0101 & $+3.2\%$  & 18/28 &
      $\mathbf{0.0314}$ &
      $0.0532$ \\
    \bottomrule\\[2pt]
    \end{tabular}
\caption{
    \textbf{Trade forecasting.} Horizon $h=1$ tests compare 30 paired $t$ vs.~$t+1$ networks, $h=2$ compare 29 paired $t$ vs.~$t+2$ networks, $h=3$ compare 28 paired $t$ vs.~$t+3$ networks.}
\label{tab:trade_forecasting}
\end{table}